\definecolor{myurlcolor}{rgb}{0,0,0.7}
\newcommand{\maps}{\colon}    
\newcommand{\R}{{\mathbb R}}  
\newcommand{\C}{{\mathbb C}}  
\renewcommand{\H}{{\mathbb H}}  
\renewcommand{\O}{{\mathbb O}}  
\newcommand{\K}{{\mathbb K}}  
\newcommand{\Z}{{\mathbb Z}}  
\renewcommand{\bar}{\overline} 
\newcommand{\h}{\mathfrak{h}} 
\newcommand{\SO}{{\rm SO}}    
\newcommand{\GL}{{\rm GL}}  
\newcommand{\Spin}{{\rm Spin}}    
\newcommand{\so}{{\mathfrak{so}}}  
\newcommand{\gl}{{\mathfrak{gl}}}  
\newcommand{\g}{\mathfrak{g}}  
\newcommand{\aut}{\mathfrak{aut}} 
\newcommand{\ext}{\mathfrak{ext}} 
\renewcommand{\int}{\mathfrak{int}} 
\newcommand{\stringIIA}{\mathfrak{string}_{\rm IIA}} 
\newcommand{\stringIIB}{\mathfrak{string}_{\rm IIB}}
\newcommand{\mtwobrane}{\mathfrak{m}2\mathfrak{brane}} 
\newcommand{\End}{{\rm End}} 
\newcommand{\Hom}{{\rm Hom}} 
\newcommand{\Sym}{{\rm Sym}} 
\newcommand{\Cl}{{\mathcal{C}\ell}}    
\newcommand{\Aut}{\mathrm{Aut}} 
\newcommand{\CE}{{\rm CE}} 
\newcommand{\inclusion}{\hookrightarrow}
\newcommand{\tensor}{\otimes} 
\newcommand{\even}{{\rm even}}
\newcommand{\arxiv}[1]{\href{http://arxiv.org/abs/#1}{arXiv:{#1}}}
\newtheorem{theorem}{Theorem}
\newtheorem{lemma}[theorem]{Lemma}
\newtheorem{proposition}[theorem]{Proposition}
\theoremstyle{definition}
\newtheorem{remark}[theorem]{Remark}
\newtheorem{definition}[theorem]{Definition}
\newtheorem{example}[theorem]{Example}
\newcommand{\be}{\begin{equation}}
\newcommand{\ee}{\end{equation}}
\newcommand{\ba}{\begin{eqnarray}}
\newcommand{\ea}{\end{eqnarray}}
\newcommand{\ban}{\begin{eqnarray*}}
\newcommand{\ean}{\end{eqnarray*}}
\newcommand{\barr}{\begin{array}}
\newcommand{\earr}{\end{array}}
\newcommand{\maximalinvariant}{\star}
\begin{document}

\title{M-Theory from the Superpoint}

\author{
  John Huerta\thanks{CAMGSD, Instituto Superior T\'ecnico, Av.\ Ravisco Pais, 1049-001 Lisboa, Portugal. }\,,
  Urs Schreiber\thanks{Mathematics Institute, Czech Academy of Science, {\v Z}itna 25, 115 67 Praha 1, Czech Republic.}
}

\maketitle

\begin{abstract}
  The ``brane scan'' classifies consistent Green--Schwarz strings and membranes in
  terms of the invariant cocycles on super-Minkowski spacetimes. The ``brane
  bouquet'' generalizes this by consecutively forming the invariant higher central
  extensions induced by these cocycles, which yields the complete fundamental brane content of
  string/M-theory, including the D-branes and the M5-brane, as well as the various
  duality relations between these. This raises the question whether the
  super-Minkowski spacetimes themselves arise as maximal invariant central
  extensions.  Here we prove that they do. Starting from the simplest possible
  super-Minkowski spacetime, the superpoint, which has no Lorentz structure and no
  spinorial structure, we give a systematic process of repeated ``maximal invariant
  central extensions'', and show that it discovers the super-Minkowski spacetimes that
  contain superstrings, culminating in the 10- and 11-dimensional super-Minkowski
  spacetimes of string/M-theory and leading directly to the brane bouquet.

\end{abstract}

\tableofcontents

\newpage

\section{Introduction}

In his ``vision talk'' at the annual string theory conference in 2014, Greg Moore
highlighted the following open question in string theory \cite[Section 9]{Moore14}:

\begin{quote}
  Perhaps we need to understand the nature of time itself better. [\dots] One natural
  way to approach that question would be to understand in what sense time itself is
  an emergent concept, and one natural way to make sense of such a notion is to
  understand how pseudo-Riemannian geometry can emerge from more fundamental and
  abstract notions such as categories of branes.
\end{quote}

We are going to tell an origin story for spacetime, in which it emerges from the
simplest kind of supermanifold: the superpoint, denoted $\R^{0|1}$. This is the
supermanifold with no bosonic coordinates, and precisely one fermionic
coordinate. From this minimal mathematical space, which has no Lorentz structure and
no spinorial structure, we will give a systematic process to construct
super-Minkowski spacetimes up to dimension 11, complete with their Lorentz structures
and spinorial structures. Indeed this is the same mathematical mechanism that makes,
for instance, the M2-brane and then the M5-brane emerge from 11d spacetime.  It is
directly analogous to the D0-brane condensation by which 11d spacetime emerges out of
the type IIA spacetime of dimension 10.

To make all this precise, first recall that the super $p$-branes of string theory and
M-theory, in their incarnation as `fundamental branes' or `probe branes', are
mathematically embodied in terms of what are called `$\kappa$-symmetric
Green--Schwarz-type functionals'. See Sorokin \cite{Sorokin00} for review and further
pointers.

Not long after Green and Schwarz \cite{GreenSchwarz84} discovered their celebrated
action functional for the superstring, Henneaux and Mezincescu observed
\cite{HenneauxMezincescu85} that the previously somewhat mysterious term in the
Green--Schwarz action, the one which ensures its $\kappa$-symmetry, is in a fact
nothing but the WZW-type functional for super-Minkowski spacetime regarded as a
supergroup. This is mathematically noteworthy, because WZW-type functionals are a
natural outgrowth of super Lie algebra cohomology \cite{AzcarragaIzquierdo,
FSS13}. This suggests that the theory of super $p$-branes is to some crucial extent a
topic purely in super Lie theory, hence amenable to mathematical precision and
classification tools.

Indeed, Azc\'arraga and Townsend \cite{AzcarragaTownsend89} later showed (following
Ach{\'u}carro et al. \cite{AETW87}) that it is the $\Spin(d-1,1)$-invariant super
Lie algebra cohomology of super-Minkowski spacetime which classifies the
Green--Schwarz superstring \cite{GreenSchwarz84}, the Green--Schwarz-type
supermembrane \cite{BergshoeffSezginTownsend87}, as well as all their double
dimensional reductions \cite{DuffHoweInamiStelle87} \cite[Section 2]{FSS16b}, a fact
now known as the ``old brane scan'' \cite{Duff88}.\footnote{ The classification of
these cocycles is also discussed by Movshev et al.\ \cite{MovshevSchwarzXu11} and
Brandt \cite{BrandtII,BrandtIII,BrandtIV}. A unified derivation of the cocycle
conditions is given by Baez and Huerta \cite{BaezHuerta09, BaezHuerta10}. See also
Foot and Joshi \cite{FootJoshi}.  }

For example, for minimal spacetime supersymmetry there is, up to rescaling, a single
non-trivial invariant $(p+2)$-cocycle corresponding to a super $p$-brane in $d$
dimensional spacetime, for just those pairs of $(d,p)$ with $d \leq 11$ that are
marked by an asterisk in the following table.

\begin{table}[H]
\begin{center}

\scalebox{0.9}{
\begin{tabular}{|r||c|c|c|c|c|c|c|c|c|c|}
  \hline
     ${d}\backslash p$ &   & $1$ & $2$ & $3$ & $4$ & $5$ & $6$ & $7$ & $8$ & $9$
	 \\[5pt]
	 \hline \hline
	 $11$ & & &
	  $\star$  &
	  &  &
	 &&& &
	 \\[5pt]
	 \hline
	 $10$ &		
	   & $\star$
		&
		&
		&
		&
       $\star$
	  &
	  &
	  &
	  &
	 \\[5pt]
	 \hline
	 $9$ & & & & & $\star$ & & & & &
	 \\[5pt]
	 \hline
	 $8$  & & & & $\star$ & & & & & &
	 \\[5pt]
	 \hline
	 $7$  & & & $\star$ & & & & & & &
	 \\[5pt]
	 \hline
	 $6$  & & $\star$
		 & & $\star$ & & & & & &
	 \\[5pt]
	 \hline
	 $5$ & &  & $\star$ & & & & & & &
	 \\[5pt]
	 \hline
	 $4$  & & $\star$ & $\star$ &&&&& & &
	 \\[5pt]
	 \hline
	 $3$  & & $\star$ &&&&& & & &
	 \\[5pt]
	 \hline
  \end{tabular}
  }
\end{center}
\caption{The old brane scan.}
\end{table}

Here the entry at $d = 10$ and $p = 1$ corresponds to the Green--Schwarz superstring,
the entry at $d = 10$ and $p = 5$ to the NS5-brane, and the entry at $d = 11$, $p =
2$ to the M2-brane of M-theory fame \cite[Chapter II]{Duff99}. Moving down and to the
left on the table corresponds to double dimensional reduction
\cite{DuffHoweInamiStelle87} \cite[Section 2]{FSS16b}.

This result is striking in its achievement and its failure: On one hand it is
remarkable that the existence of super $p$-brane species may be reduced to a
mathematical classification of super Lie algebra cohomology. But on the other hand,
it is disconcerting that this classification misses so many $p$-brane species that
are thought to exist: The M5-brane in $d = 11$ and all the D-branes in $d = 10$ are
absent from the old brane scan, as are all their double dimensional
reductions.\footnote{ A partial completion of the old brane scan can be achieved by
classifying superconformal structures that may appear in the near horizon geometry of
`solitonic' or `black' $p$-branes \cite{BlencoweDuff88, Duff09}.}

However, it turns out that this problem is not a shortcoming of super Lie theory as such,
but only of the tacit restriction to ordinary super Lie algebras, as opposed to
`higher' super Lie algebras, also called `super Lie $n$-algebras' or `super
$L_\infty$-algebras' \cite{JHThesis, FSS13}.\footnote{Notice that these are Lie
$n$-algebras in the sense of Stasheff \cite{LadaStasheff93, LadaMarkl95, SSS09} as
originally found in string field theory by Zwiebach \cite[Section 4]{Zwiebach92} not
``$n$-Lie algebras'' in the sense of Filippov.  However, the two notions are not
unrelated. At least the Filippov 3-Lie algebras that appear in the Bagger--Lambert model
for coincident solitonic M2-branes may naturally be understood as Stasheff Lie
2-algebras equipped with a metric form \cite[Section 2]{PSa}.}

One way to think of super Lie $n$-algebras is as the answer to the following
question: Since, by a classical textbook fact, 2-cocycles on a super Lie algebra
classify its central extensions in the category of super Lie algebras, what do higher
degree cocycles classify?  The answer (\cite[Prop. 3.5]{FSS13} based on \cite[Theorem
3.1.13]{FiorenzaRogersSchreiber13} and \cite[Theorem 57]{BaezCrans}) is that higher
degree cocycles classify precisely \emph{higher} central extensions, formed in the
homotopy theory of super $L_\infty$-algebras.  But in fact the Chevalley--Eilenberg
algebras for the canonical models of these higher extensions are well known in parts
of the supergravity literature, these are just the ``free differential
algebras''\footnote{Unfortunately, the ``free differential algebras'' of D'Auria and
Fr\'e are not free. In the parlance of modern mathematics, they are differential
graded commutative algebras, where the underlying graded commutative algebra is free,
but the differential is not. We will thus refer to them as ``FDA''s, with quotes.} or
``FDA''s of D'Auria and Fr\'e \cite{DAuriaFre82}.

Hence every entry in the ``old brane scan'', since it corresponds to a cocycle, gives
a super Lie $n$-algebraic extension of super-Minkowski spacetime. Notably the
3-cocycles for the superstring give rise to super Lie 2-algebras and the 4-cocycles
for the supermembrane give rise to super Lie 3-algebras.  These are super-algebraic
analogs of the \emph{string Lie 2-algebra} \cite{BaezCrans} \cite[appendix]{FSS14} which controls the
Green--Schwarz anomaly cancellation of the heterotic string
\cite{TwistedDifferential}, and hence they are called the \emph{superstring Lie
2-algebra} \cite{JHThesis}, to be denoted $\mathfrak{string}$:
$$
\raisebox{9pt}{
\mbox{
\begin{tabular}{|r||c|}
  \hline
     ${d}\backslash p$ &  $1$
	 \\[5pt]
	 \hline \hline
	 $10$ &  $\star$
     \\
	 \hline
  \end{tabular}
}
}
\;\;\;\;\;\;\;\;\;\;\;\; \leftrightarrow
\;\;\;\;\;\;\;\;\;\;\;\;
  \raisebox{20pt}{
  \xymatrix@C=9pt{
    \mathfrak{string}
      \ar[d]
    &
    \mbox{\tiny \raisebox{2pt}{\begin{tabular}{c} extended super Minkowski \\ super Lie 2-algebra \end{tabular}}}
    \\
    \mathbb{R}^{9,1\vert \mathbf{16}}
      &
    \mbox{\raisebox{2pt}{\tiny  \begin{tabular}{c} super Minkowski \\ super Lie algebra\end{tabular} } }
  }
  }
$$
and the \emph{supermembrane Lie 3-algebra}, denoted $\mtwobrane$:
$$
\raisebox{9pt}{
\mbox{
\begin{tabular}{|r||c|}
  \hline
     ${d}\backslash p$ &  $2$
	 \\[5pt]
	 \hline \hline
	 $11$ &  $\star$
     \\
	 \hline
  \end{tabular}
}
}
\;\;\;\;\;\;\;\;\;\;\;\; \leftrightarrow
\;\;\;\;\;\;\;\;\;\;\;\;
  \raisebox{20pt}{
  \xymatrix@C=9pt{
    \mathfrak{m}2\mathfrak{brane}
      \ar[d]
    &
    \mbox{\tiny \raisebox{2pt}{\begin{tabular}{c} extended super Minkowski \\ super Lie 3-algebra \end{tabular}}}
    \\
    \mathbb{R}^{10,1\vert \mathbf{32}}
      &
    \mbox{\raisebox{2pt}{\tiny  \begin{tabular}{c} super Minkowski \\ super Lie algebra\end{tabular} } }
  }
  }
$$
A discussion of these structures as objects in higher Lie theory appears in Huerta's
thesis \cite{JHThesis}. Note that $\mathfrak{string}$ comes in several variants,
denoted $\mathfrak{string}_{\rm IIA}$, $\mathfrak{string}_{\rm IIB}$ and
$\mathfrak{string}_{\rm het}$, corresponding to the type IIA, IIB, and heterotic
variants of string theory. In their dual incarnation as ``FDA''s, the
$\mathfrak{string}$ and $\mtwobrane$ algebras are the extended super-Minkowski
spacetimes considered by Chryssomalakos et al.\ \cite{CAIB00}. We follow their idea,
and call extensions of super-Minkowski spacetime to super Lie $n$-algebras
\emph{extended super-Minkowski spacetimes}.

Now that each entry in the old brane scan is identified with a higher super Lie
algebra in this way, something remarkable happens: \emph{new} cocycles appear on
these extended super-Minkowski spacetimes, cocycles which do not show up on plain
super-Minkowski spacetime itself. (In homotopy theory, this is a familiar phenomenon:
it is the hallmark of the construction of the `Whitehead tower' of a topological
space.)

And indeed, in turns out that the new invariant cocycles thus found do correspond to
the branes that were missing from the old brane scan \cite{FSS13}: On the super Lie
3-algebra $\mtwobrane$ there appears an invariant 7-cocycle, which corresponds to the
M5-brane, on the super Lie 2-algebra $\stringIIA$ there appears a sequence of
$(p+2)$-cocycles for $p \in \{0,2,4,6,8\}$, corresponding to the type IIA
D-branes, and on the superstring Lie 2-algebra $\stringIIB$ there appears a sequence
of $(p+2)$-cocycles for $p \in \{1,3,5,7,9\}$, corresponding to the type IIB
D-branes.  Under the identification of super Lie $n$-algebras with formal duals of
``FDA''s, the algebra behind this statement is in fact an old result: For the
M5-brane and the type IIA D-branes this is due to Chryssomalakos et al.\
\cite{CAIB00}, while for the type IIB D-branes this is due to Sakaguchi \cite[Section
2]{Sakaguchi}.  In fact, the 7-cocycle on the supermembrane Lie 3-algebra that
corresponds to the M5-brane \cite{BLNPST1} was already discovered in the 1982 paper by
D'Auria and Fr\'e \cite[Equations (3.27) and (3.28)]{DAuriaFre82}.

Each of these cocycles gives a super Lie $n$-algebra extension. If we name these
extensions by the super $p$-brane species whose WZW-term is given by the cocycle,
then we obtain the following diagram in the category of super $L_\infty$-algebras:

\vspace{-.6cm}

$$
  \vspace{.8cm}
  \scalebox{.9}{
  \xymatrix@C=8pt{
    &
    &&&& { \mathfrak{m}5\mathfrak{brane}}
     \ar[d]
    \\
    &
    &&
     && \mathfrak{m}2\mathfrak{brane}
    \ar[dd]
    &&
    \\
    &
    &
    { \mathfrak{d}5\mathfrak{brane}}
    \ar[ddr]
    &
    {\mathfrak{d}3\mathfrak{brane}}
    \ar[dd]
    &
    {\mathfrak{d}1\mathfrak{brane}}
    \ar[ddl]
    &
    & { \mathfrak{d}0\mathfrak{brane}}
    \ar[ddr]
    &
    { \mathfrak{d}2\mathfrak{brane}}
    \ar[dd]
    &
    { \mathfrak{d}4\mathfrak{brane}}
    \ar[ddl]
    \\
    &
    &
    { \mathfrak{d}7\mathfrak{brane}}
    \ar[dr]
    &
    &
    & \mathbb{R}^{10,1\vert \mathbf{32}}
      \ar[ddr]
    &&&
    { \mathfrak{d}6\mathfrak{brane}}
    \ar[dl]
    \\
    &
    &
    { \mathfrak{d}9\mathfrak{brane}}
    \ar[r]
    &
    \mathfrak{string}_{\mathrm{IIB}}
    \ar[dr]
    &
    & \mathfrak{string}_{\mathrm{het}}
      \ar[d]
    &&
    \mathfrak{string}_{\mathrm{IIA}}
    \ar[dl]
    &
    { \mathfrak{d}8\mathfrak{brane}}
    \ar[l]
    \\
    &
    &
    &
    &
    \mathbb{R}^{9,1 \vert \mathbf{16} + {\mathbf{16}}}
    \ar@{<-}@<-3pt>[r]
    \ar@{<-}@<+3pt>[r]
    & \mathbb{R}^{9,1\vert \mathbf{16}}
    \ar@<-3pt>[r]
    \ar@<+3pt>[r]
    &
    \mathbb{R}^{9,1\vert \mathbf{16} + \overline{\mathbf{16}}}
  }
  }
$$

\vspace{-.4cm}

Hence in the context of higher super Lie algebra, the ``old brane scan'' is completed
to a tree of consecutive higher central extensions emanating out of the
super-Minkowski spacetimes, with one leaf for each brane species in string/M-theory
and with one edge whenever one fundamental brane species may end on another,
with its boundary sourcing a vector- or tensor-multiplet on the worldvolume of the other brane \cite[Section
3]{FSS13}.  This is the fundamental \emph{brane bouquet} \cite[Def.\ 3.9 and Section
4.5]{FSS13}. (The \emph{black} branes and their more general intersection laws are obtained from 
this by passing to equivariant cohomology \cite{ADE}, but this will not concern us here.)

Interestingly, a fair bit of the story of string/M-theory is encoded in this purely
super Lie-$n$-algebraic mathematical structure. This includes in particular the
pertinent dualities: the KK-reduction between M-theory and type IIA theory, the
HW-reduction between M-theory and heterotic string theory, the T-duality between type
IIA and type IIB, the S-duality of type IIB, and the relation between type IIB and
F-theory. All of these are reflected as equivalences of super Lie $n$-algebras
obtained from the brane bouquet \cite{FSS16, FSS16b, ADE}.  The diagram of
super $L_\infty$-algebras that reflects these $L_\infty$-equivalences looks like a
candidate to fill Polchinski's famous schematic picture of M-theory \cite[Figure
1]{Polchinski} \cite[Figure 4]{Witten98} with mathematical life:

\vspace{-1.6cm}

    $$
  \scalebox{.85}{
  \xymatrix@C=2pt{
    && && &&
    \\
    &&
	&& \mathfrak{D}0\mathfrak{brane} \ar[drr]
	& \mathfrak{D}2\mathfrak{brane} \ar[dr]
	& \mathfrak{D}4\mathfrak{brane}\ar[d]
	& \mathfrak{D}6\mathfrak{brane} \ar[dl]
	& \mathfrak{D}8\mathfrak{brane} \ar[dll]
    \\
    & \ar[urr]^{\mathrm{KK}}&
	&
	&
	&& \mathfrak{string}_{\mathrm{IIA}} \ar[d]|(.4){{d=10} \atop {N= \mathbf{16}+ \overline{\mathbf{16}}}}
	&&
	&&
	&&
     \ar@{<->}[ddd]^{\mbox{T}}	
	&&
    \\
    &
    \ar[d]_{\mathrm{HW}}
    && \mathfrak{m}5\mathfrak{brane} \ar[r]
	& \mathfrak{m}2\mathfrak{brane} \ar[rr]|-{{d=11} \atop {N= \mathbf{32}}}
	&& \mathbb{R}^{d-1,1\vert N}
	&&
	&&
	&&
	\\
	&&
    &
    \mathfrak{ns}5\mathfrak{brane} \ar[urrr]|<<<<<<<<<<<<<<<{{d=10}\atop {N = \mathbf{16}}}
	&
    \mathfrak{string}_{\mathrm{het}} \ar[urr]|<<<<<<<{{d=10}\atop {N = \mathbf{16}}}
    \\
    &&&&
	& \mathfrak{string}_{\mathrm{IIB}} \ar[uur]|<<<<<<<<<{{d = 10}\atop {N= \mathbf{16} + \mathbf{16}}}
	\ar@{.}[r]
	& (p,q)\mathfrak{string}_{\mathrm{IIB}} \ar[uu]|<<<<<<<<{{d = 10}\atop {N= \mathbf{16}+ \mathbf{16}}}
	\ar@{.}[r]
	& \mathfrak{Dstring} \ar[uul]|<<<<<<<<<{{d = 10}\atop {N= \mathbf{16} + \mathbf{16}}}
	&&
	&&
	&&
    \\
    &&
	&& (p,q)1\mathfrak{brane} \ar[urr]
	& (p,q)3\mathfrak{brane} \ar[ur]
	& (p,q)5\mathfrak{brane} \ar[u]
	& (p,q)7\mathfrak{brane} \ar[ul]
	& (p,q)9\mathfrak{brane} \ar[ull]
	&
	\\
	&& &&  & \ar@{<->}[rr]_S &  &&
  }
  }
$$

Now note that not all of the super $p$-brane cocycles are of higher degree. One of
them, the cocycle for the D0-brane, is an ordinary 2-cocycle.  Accordingly, the
extension that it classifies is an ordinary super Lie algebra extension. In fact one
finds that the D0-cocycle classifies the central extension of 10-dimensional type IIA
super-Minkowski spacetime to the 11-dimensional spacetime of M-theory.  We can
express these relationships by noting the following diagram of super Lie $n$-algebras
is, in the sense of homotopy theory, a `homotopy pullback':
$$
  {
  \xymatrix@C=8pt{
    &
    &
    &
    &
    &
    & { \mathfrak{d}0\mathfrak{brane}}
    \ar@{}[ddd]|{\mbox{\tiny (pb)}}
    \ar[ddr]^{\mbox{\tiny \begin{tabular}{l} super $L_\infty$-extension \\ classified by \\ D0-cocycle\end{tabular}}}
    \ar[dl]
    &
    &
    \\
    &
    &
    &
    &
    & \mathbb{R}^{10,1\vert \mathbf{32}}
      \ar[ddr]_{\mbox{\tiny \mbox{ \begin{tabular}{l} M-theory \\ spacetime \\ extension\end{tabular} }}}
    &&&
    \\
    &
    &
    &
    &
    &
    &&
    \mathfrak{string}_{\mathrm{IIA}}
    \ar[dl]^<<<<{\mbox{\tiny \begin{tabular}{l} super $L_\infty$-extension \\ classified by \\ type IIA string cocycle\end{tabular}}}
    &
    \\
    &
    &
    &
    &
    &
    &
    \mathbb{R}^{9,1\vert \mathbf{16} + \overline{\mathbf{16}}}
  }
  }
$$
This is the precise way to say that the D0-brane cocycle on $\stringIIA$ comes from
pulling back an ordinary 2-cocycle on $\R^{9,1|{\bf 16 + \bar{16}}}$, which in turn
is extended to $\R^{10,1|{\bf 32}}$ by the same 2-cocycle. We may think of this as a
super $L_\infty$-theoretic incarnation of the observation that D0-brane condensation
in type IIA string theory leads to the growth of the 11th dimension of M-theory
\cite[Remark 4.6]{FSS13}, as explained by Polchinski \cite[Section 6]{Polchinski}.

This raises an evident question: Might there be a precise sense in which \emph{all}
dimensions of spacetime originate from the condensation of some kind of 0-branes in
this way?  Is the brane bouquet possibly rooted in the superpoint? Such that the
ordinary super-Minkowski spacetimes, not just extended super-Minkowski spacetimes
such as $\mathfrak{string}$ and $\mtwobrane$, arise from a process of 0-brane
condensation ``from nothing''?

Since the brane bouquet proceeds at each stage by forming \emph{maximal invariant}
extensions, the mathematical version of this question is: Is there a sequence of
\emph{maximal invariant} central extensions that start at the super-point and produce
the super-Minkowski spacetimes in which superstrings and supermembranes exist?

To appreciate the substance of this question, notice that it is clear that every
super-Minkowski spacetime is \emph{some} central extension of a superpoint
\cite[Section 2.1]{CAIB00}: the super-2-cocycle classifying this extension is just
the super-bracket that turns two supercharges into a translation generator. But there
are many central extensions of superpoints that are nothing like super-Minkowski
spacetimes. The question is whether the simple principle of consecutively forming
\emph{maximal invariant} central extensions of super-Lie algebras (as opposed to more
general central extensions) discovers spacetime.

We shall prove that this is the case: this is our main result, Theorem
\ref{thetheorem}.  It says that in the following diagram of super-Minkowski super Lie
algebras, each diagonal morphism is singled out as being the maximal invariant
central extension of the super Lie algebra that it points to:\footnote{ The double
arrows stand for the two different canonical inclusions of $\mathbb{R}^{d-1,1\vert
N}$ into $\mathbb{R}^{d-1,1\vert N + N}$, being the identity on $\mathbb{R}^{d-1,1}$
and sending $N$ identically either to the first or to the second copy in the direct
sum $N + N$.}
$$
  \vspace{-.3cm}
  \scalebox{1}{
  \xymatrix@C=8pt{
    &
    &
    &
    &
    & \mathbb{R}^{10,1\vert \mathbf{32}}
      \ar[ddr]
    &&&
    \\
    &
    &
    &
    &
    &
    &&
    &
    \\
    &
    &
    &
    &
    \mathbb{R}^{9,1 \vert \mathbf{16} + {\mathbf{16}}}
    \ar@{<-}@<-3pt>[r]
    \ar@{<-}@<+3pt>[r]
    & \mathbb{R}^{9,1\vert \mathbf{16}} \ar[dr]
    \ar@<-3pt>[r]
    \ar@<+3pt>[r]
    &
    \mathbb{R}^{9,1\vert \mathbf{16} + \overline{\mathbf{16}}}
    \\
    &
    &
    &
    &
    &
    \mathbb{R}^{5,1\vert \mathbf{8}}
    \ar[dl]
    &
    \mathbb{R}^{5,1 \vert \mathbf{8} + \overline{\mathbf{8}}}
    \ar@{<-}@<-3pt>[l]
    \ar@{<-}@<+3pt>[l]
    \\
    &
    &
    &
    &
    \mathbb{R}^{3,1\vert \mathbf{4}+ \mathbf{4}}
    \ar@{<-}@<-3pt>[r]
    \ar@{<-}@<+3pt>[r]
    &
    \mathbb{R}^{3,1\vert \mathbf{4}}
    \ar[dl]
    \\
    &
    &
    &
    &
    \mathbb{R}^{2,1 \vert \mathbf{2} + \mathbf{2} }
    \ar@{<-}@<-3pt>[r]
    \ar@{<-}@<+3pt>[r]
    &
    \mathbb{R}^{2,1 \vert \mathbf{2}}
    \ar[dl]
    \\
    &
    &
    &
    &
    \mathbb{R}^{0 \vert \mathbf{1}+ \mathbf{1}}
    \ar@{<-}@<-3pt>[r]
    \ar@{<-}@<+3pt>[r]
    &
    \mathbb{R}^{0\vert \mathbf{1}}
  }
  }
$$

\vspace{.2cm}

Note that we do \emph{not} specify by hand the groups under which these extensions
are to be invariant. Instead these groups are being discovered stagewise, along with
the spacetimes.  Namely we say (Definition \ref{MaximalInvariantCentralExtension})
that an extension $\widehat{\mathfrak{g}} \to \mathfrak{g}$ is \emph{invariant} if it
is invariant with respect to the `simple external automorphisms' inside the
automorphism group of $\mathfrak{g}$ (Definition \ref{internalsymm}). This is a
completely intrinsic concept of invariance.

We show that for $\mathfrak{g}$ a super-Minkowski spacetime, then this intrinsic
group of simple external automorphisms is the spin group, the double cover of the
connected Lorentz group in the corresponding dimension---this is Proposition
\ref{TheAutomorphismLieAlgebras}, read at the Lie group level. This may essentially
be folklore \cite[p. 95]{Evans}, but it seems worthwhile to pinpoint this statement.
It says that as the extension process grows out of the superpoint, not only are
the super-Minkowski spacetimes being discovered as supertranslation supersymmetry
groups, but also their Lorentzian metric structure is being discovered alongside.

\begin{center}
\begin{tabular}{|c|c||c|c|}
  \hline
  \begin{tabular}{c}
    super-Minkowski
    \\
    super Lie algebra
  \end{tabular}
  &
  \begin{tabular}{c}
    simple
    \\
    external automorphisms
  \end{tabular}
  &
  \begin{tabular}{c}
    induced
    \\
    Cartan-geometry
  \end{tabular}
  &
  \begin{tabular}{c}
    torsion
    \\
    freeness
  \end{tabular}
  \\
  \hline
  \hline
  \begin{tabular}{c}
    $\mathbb{R}^{d-1,1\vert N}$
  \end{tabular}
  &
  \begin{tabular}{c}
    $\mathrm{Spin}(d-1,1)$
  \end{tabular}
  &
  \begin{tabular}{c}
    supergravity
  \end{tabular}
  &
  \begin{tabular}{c}
   in $d = 11$:
   \\
   Einstein's equations
  \end{tabular}
  \\
  \hline
\end{tabular}
\end{center}

To highlight this, observe that with every pair $(V,G)$ consisting of a super vector
space $V$ and a subgroup $G \subset \mathrm{GL}(V)$ of its general linear supergroup,
there is associated a type of geometry, namely the corresponding \emph{Cartan
geometry}: A \emph{$(V,G)$-geometry} is a supermanifold with tangent spaces
isomorphic to $V$ and equipped with a reduction of the structure group of its super
frame bundle from $\mathrm{GL}(V)$ to $G$ \cite{Lott90}.

Now for the pairs $(\mathbb{R}^{d-1,1\vert N}, \mathrm{Spin}(d-1,1))$ that emerge out
of the superpoint according to Proposition \ref{TheAutomorphismLieAlgebras} and
Theorem \ref{thetheorem}, this is what encodes a field configuration of
$d$-dimensional $N$-supersymmetric supergravity: Supermanifolds locally modeled on
$\mathbb{R}^{d-1,1\vert N}$ are precisely what underlie the \emph{superspace}
formulation of supergravity, and the reduction of its structure group to the
$\mathrm{Spin}(d-1,1)$-cover of the connected Lorentz group $\SO_0(d-1,1)$ is
equivalently a choice of super-vielbein field, which is a field configuration of
supergravity.

Observe also that the mathematically most natural condition to demand from such a
super-Cartan geometry is that it be `torsion free' \cite{Lott90}. In view of this it
is worthwhile to recall the remarkable theorem of Howe \cite{Howe97}, based on
Candiello and Lechner \cite{CandielloLechner93}: For $d = 11$ the equations of motion
of supergravity are \emph{implied} by the torsion-freeness of the super-vielbein.

In summary, Theorem \ref{thetheorem} shows that the brane bouquet, and with it at
least a fair chunk of the structure associated with the word ``M-theory'', has its
mathematical root in the superpoint, and Proposition \ref{TheAutomorphismLieAlgebras}
adds that as the superspacetimes grow out of the superpoint, they consecutively
discover their relevant Lorentzian metric structure and spinorial structure, and
finally their supergravity equations of motion.

$$
  \vspace{-.3cm}
  \scalebox{1}{
  \xymatrix@C=8pt{
    &
    &&&& { \mathfrak{m}5\mathfrak{brane}}
     \ar[d]
    \\
    &
    &&
     && \mathfrak{m}2\mathfrak{brane}
    \ar[dd]
    &&
    \\
    &
    &
    { \mathfrak{d}5\mathfrak{brane}}
    \ar[ddr]
    &
    {\mathfrak{d}3\mathfrak{brane}}
    \ar[dd]
    &
    {\mathfrak{d}1\mathfrak{brane}}
    \ar[ddl]
    &
    & { \mathfrak{d}0\mathfrak{brane}}
    \ar@{}[ddd]|{\mbox{\tiny (pb)}}
    \ar[ddr]
    \ar[dl]
    &
    { \mathfrak{d}2\mathfrak{brane}}
    \ar[dd]
    &
    { \mathfrak{d}4\mathfrak{brane}}
    \ar[ddl]
    \\
    &
    &
    { \mathfrak{d}7\mathfrak{brane}}
    \ar[dr]
    &
    &
    & \mathbb{R}^{10,1\vert \mathbf{32}}
      \ar[ddr]
    &&&
    { \mathfrak{d}6\mathfrak{brane}}
    \ar[dl]
    \\
    &
    &
    { \mathfrak{d}9\mathfrak{brane}}
    \ar[r]
    &
    \mathfrak{string}_{\mathrm{IIB}}
    \ar[dr]
    &
    & \mathfrak{string}_{\mathrm{het}}
      \ar[d]
    &&
    \mathfrak{string}_{\mathrm{IIA}}
    \ar[dl]
    &
    { \mathfrak{d}8\mathfrak{brane}}
    \ar[l]
    \\
    &
    &
    &
    &
    \mathbb{R}^{9,1 \vert \mathbf{16} + {\mathbf{16}}}
    \ar@{<-}@<-3pt>[r]
    \ar@{<-}@<+3pt>[r]
    & \mathbb{R}^{9,1\vert \mathbf{16}} \ar[dr]
    \ar@<-3pt>[r]
    \ar@<+3pt>[r]
    &
    \mathbb{R}^{9,1\vert \mathbf{16} + \overline{\mathbf{16}}}
    \\
    &
    &
    &
    &
    &
    \mathbb{R}^{5,1\vert \mathbf{8}}
    \ar[dl]
    &
    \mathbb{R}^{5,1 \vert \mathbf{8} + \overline{\mathbf{8}}}
    \ar@{<-}@<-3pt>[l]
    \ar@{<-}@<+3pt>[l]
    \\
    &
    &
    &
    &
    \mathbb{R}^{3,1\vert \mathbf{4}+ \mathbf{4}}
    \ar@{<-}@<-3pt>[r]
    \ar@{<-}@<+3pt>[r]
    &
    \mathbb{R}^{3,1\vert \mathbf{4}}
    \ar[dl]
    \\
    &
    &
    &
    &
    \mathbb{R}^{2,1 \vert \mathbf{2} + \mathbf{2} }
    \ar@{<-}@<-3pt>[r]
    \ar@{<-}@<+3pt>[r]
    &
    \mathbb{R}^{2,1 \vert \mathbf{2}}
    \ar[dl]
    \\
    &
    &
    &
    &
    \mathbb{R}^{0 \vert \mathbf{1}+ \mathbf{1}}
    \ar@{<-}@<-3pt>[r]
    \ar@{<-}@<+3pt>[r]
    &
    \mathbb{R}^{0\vert \mathbf{1}}
  }
  }
$$

\vspace{2em}

\section{Automorphisms of super-Minkowski spacetimes}

For our main result, Theorem \ref{thetheorem}, we need to know the automorphisms
(Definition \ref{AutomorphismLieGroup}) of the `super Minkowski super Lie algebras'
$\mathbb{R}^{d-1,1\vert N}$. We give the precise definition of $\R^{d-1,1|N}$ as
Definition \ref{def:SuperMinkowski}, but for the reader's convenience, we quickly
recall the idea. The super-Minkowski super Lie algebra $\R^{d-1,1|N}$ is the version
of Minkowski spacetime, $\R^{d-1,1}$, used when discussing supersymmetry. Unlike
Minkowski spacetime, which is merely a vector space, super-Minkowski is a super Lie
algebra: it has an underlying vector space that is $\Z_2$-graded, with an even and
odd part:
\[ \R^{d-1,1|N}_{\rm even} = \R^{d-1,1}, \quad \R^{d-1,1|N}_{\rm odd} = N. \]
Here, $\R^{d-1,1}$ is ordinary Minkowski spacetime, while $N$ is a spinor
representation of $\Spin(d-1,1)$. The Lie group $\Spin(d-1,1)$ is the double cover of the
connected Lorentz group, $\SO_0(d-1,1)$, so it also acts on $\R^{d-1,1}$. The Lie
bracket on this super Lie algebra is nonzero only on $N$, and consists of a pairing
turning spinors into vectors:
\[ [-,-] \maps N \tensor N \to \R^{d-1,1} \]
which is required to be an equivariant map between representations of $\Spin(d-1,1)$.

Our key idea is that we can extract the Lorentz symmetries of $\R^{d-1,1\vert N}$
merely from its structure as a super Lie algebra, by looking at a particular piece of
the automorphisms we call the `simple external automorphisms'. This result may be
folklore (see Evans \cite[p. 95]{Evans}), but since we did not find a full account in
the literature, we provide a proof here. After some simple lemmas, the result is
Proposition \ref{TheAutomorphismLieAlgebras}. To begin, we define the `simple
external automorphisms' of a super Lie algebra.

\begin{definition}[external and internal automorphisms, admissible algebras]
  \label{internalsymm}
  Let $\mathfrak{g}$ be a super Lie algebra (Def. \ref{SuperLieAlg}), and let
  $\aut(\g)$ be the ordinary Lie algebra of infinitesimal automorphisms of $\g$ which
  preserve the $\Z_2$-grading (Prop. \ref{automorphismLieAlg}).  We define the Lie
  algebra $\int(\g)$ of \emph{internal automorphisms} of $\g$ as the Lie
  subalgebra of $\aut(\g)$ which acts trivially on the even part $\g_{\rm even}$. In
  other words, it is the Lie subalgebra of even derivations of $\g$ which vanish on
  $\mathfrak{g}_{\mathrm{even}}$. This is clearly an ideal, so that the quotient
  $$
    \mathfrak{ext}(\mathfrak{g}) := \mathrm{aut}(\mathfrak{g})/\mathfrak{int}(\mathfrak{g})
  $$
  of all automorphisms by internal ones is again a Lie algebra, the Lie algebra of
  \emph{external automorphisms} of $\mathfrak{g}$. We thus have a short exact sequence:
  $$
    0 \to
    \mathfrak{int}(\mathfrak{g}) \to \mathfrak{aut}(\mathfrak{g})
     \to
    \mathfrak{ext}(\mathfrak{g})
     \to 0
     \,.
  $$
  We will say that $\g$ is \emph{admissible} if this sequence splits and the external
  automorphism algebra $\ext(\mathfrak{g})$ is reductive. For an admissible algebra $\g$,
  we can thus view $\ext(\g)$ as a subalgebra of $\aut(\g)$. Moreover, because we
  demand $\ext(\g)$ be reductive, $\ext(\g)$ decomposes as a direct sum of its center
  and its maximal semisimple Lie subalgebra. We thus define the \emph{simple external
  automorphisms}
  $$
    \mathfrak{ext}_{\mathrm{simp}}(\mathfrak{g})
      \hookrightarrow
    \mathfrak{ext}(\mathfrak{g})
      \hookrightarrow
    \mathfrak{aut}(\mathfrak{g})
  $$
  to be the semisimple part of $\ext(\g)$. 
\end{definition}
\begin{example}
  \label{symmetryR}
  The internal automorphisms (Definition \ref{internalsymm}) of the super-Minkowski
  super Lie algebra $\mathbb{R}^{d-1,1\vert N}$ are the `R-symmetries' from the
  physics literature \cite[p. 56]{FreedLectures}.
\end{example}

Because the super-Minkowski super Lie algebra $\R^{d-1,1|N}$ is built from
$\Spin(d-1,1)$ representations and $\Spin(d-1,1)$-equivariant maps, $\Spin(d-1,1)$
acts on this super Lie algebra by automorphism. It thus acts on the full automorphism
group $\Aut(\R^{d-1,1|N})$ by conjugation, and on the Lie algebra
$\aut(\R^{d-1,1|N})$ by the adjoint action. These facts are key for our first lemma.

\begin{lemma}
  \label{AutomorphismsAsAGraph}
  Consider a super-Minkowski super Lie algebra $\mathbb{R}^{d-1,1\vert N}$
  (Definition \ref{def:SuperMinkowski}) in any dimension $d \geq 3$ and for any real spinor
  representation $N$ of $\Spin(d-1,1)$. Then the automorphism Lie algebra
  $\mathfrak{aut}(\mathbb{R}^{d-1,1|N})$ (Proposition \ref{automorphismLieAlg}) is
  the graph of a surjective, $\mathrm{Spin}(d-1,1)$-equivariant Lie algebra
  homomorphism
  $$
    K \maps \mathfrak{g}_s \longrightarrow \mathfrak{g}_v
    \,,
  $$
  where $\g_s \subseteq \gl(N)$ and $\g_v \subseteq \gl(\R^{d-1,1})$ are the
  projections of $\aut(\R^{d-1,1|N}) \subseteq \gl(N) \oplus \gl(\R^{d-1,1})$ onto
  the summands. Here, $K$ is equivariant with respect to the adjoint action of
  $\Spin(d-1,1)$ restricted to $\g_s$ and $\g_v$.

  In particular the kernel of $K$ is the internal automorphism algebra (Definition
  \ref{internalsymm}), also known as the R-symmetry algebra (Example
  \ref{symmetryR}):
  $$
    \mathrm{ker}(K) \simeq \mathrm{int}(\mathbb{R}^{d-1,1\vert N})
    \,.
  $$
\end{lemma}
\begin{proof}
  We will consider the corresponding inclusion at the level of groups
  $$
    \mathrm{Aut}(\mathbb{R}^{d-1,1\vert N})
      \inclusion
    \mathrm{GL}(N) \times \GL(\R^{d-1,1})
  $$
  with projections $G_s \subseteq \GL(N)$ and $G_v \subseteq \GL(\R^{d-1,1})$. The
  result will then follow by differentiation.

  Note that the spinor-to-vector pairing
  $$
    [-,-] \maps N \otimes N \to \R^{d-1,1} 
  $$
  is surjective, because it is a nonzero map of
  $\mathrm{Spin}(d-1,1)$-representations, and $\R^{d-1,1}$ is irreducible for dimension $d \geq 3$. Hence
  for every vector $v \in \R^{d-1,1}$, there is a pair of spinors $\psi,\phi \in N$ such that
  $$
    v = [\psi,\phi].
  $$
  It follows that for any automorphism $(f,g) \in \mathrm{Aut}(\mathbb{R}^{d-1,1\vert
  N}) \subseteq G_s \times G_v$, $g$ is uniquely determined by $f$ because $(f,g)$ is an automorphism:
  $$
    g(v) = [f(\psi), f(\phi)]
    \,.
  $$
  This determines a function $k \maps G_s \to G_v$ sending $f$ to $g$. It is
  surjective by construction of $G_v$, and is a group homomorphism because its graph
  $\mathrm{Aut}(\mathbb{R}^{d-1,1\vert N})$ is a group. Finally, conjugating $(f,g)$
  by an element of $\Spin(d-1,1)$, it is a quick calculation to check that $k$ is
  $\mathrm{Spin}(d-1,1)$-equivariant, using the equivariance of the spinor-to-vector
  pairing $[-,-]$.
\end{proof}

\begin{lemma}
  \label{AutIsgs}
  Let $\g_s$ be as in Lemma \ref{AutomorphismsAsAGraph}. Then $\aut(\R^{d-1,1|N}) \simeq \g_s$ as Lie algebras. 
\end{lemma}
\begin{proof}
  Because $\aut(\R^{d-1,1|N})$ is the graph of the homomorphism $K \maps \g_s \to
  \g_v$ from Lemma \ref{AutomorphismsAsAGraph}, it is isomorphic to the domain of
  this homomorphism, $\g_s$.
\end{proof}
\begin{lemma}
  \label{DecompositionOfGs}
  Let $N$ be a real spinor representation of $\Spin(d-1,1)$ in some dimension
  $d \geq 3$. Then the Lie algebra $\mathfrak{g}_v$ from Lemma \ref{AutomorphismsAsAGraph}
  decomposes as a $\mathrm{Spin}(d-1,1)$-representation into the direct sum of the
  adjoint representation with the trivial representation:
  $$
    \mathfrak{g}_v \;\simeq\; \mathfrak{so}(d-1,1) \oplus \mathbb{R}.
  $$
  Similarly, the Lie algebra $\mathfrak{g}_s$ from Lemma \ref{AutomorphismsAsAGraph}
  decomposes as a direct sum of exterior powers of the vector representation
  $\R^{d-1,1}$:
  $$
    \mathfrak{g}_s \simeq \underset{i}{\oplus} \, \Lambda^{n_i} \R^{d-1,1}
    \,.
  $$
\end{lemma}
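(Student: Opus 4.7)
The plan is to analyze $\mathfrak{g}_s$ first, using the Clifford module structure of $N$, and then to deduce the decomposition of $\mathfrak{g}_v$ from the $\mathrm{Spin}(d-1,1)$-equivariance of the surjection $K$ of lemma \ref{AutomorphismsAsAGraph} via a Schur-type argument. A key observation is that both $\mathfrak{g}_s \subseteq \mathfrak{gl}(N)$ and $\mathfrak{g}_v \subseteq \mathfrak{gl}(\mathbb{R}^d)$ are $\mathrm{Spin}(d-1,1)$-submodules (by the equivariance in lemma \ref{AutomorphismsAsAGraph}), and hence, by complete reducibility of finite-dimensional $\mathrm{Spin}(d-1,1)$-representations, each decomposes into irreducibles.

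For $\mathfrak{g}_s$, I would use the fact that $N$ is a Clifford module: the Clifford algebra $\mathrm{Cl}(\mathbb{R}^{d-1,1})$ maps $\mathrm{Spin}$-equivariantly into $\mathrm{End}(N)$, and the spinor bilinear pairing used to define the super-bracket identifies $N^* \simeq N$ (or its conjugate) as $\mathrm{Spin}$-modules. Under this identification $\mathfrak{gl}(N) \simeq N \otimes N^* \simeq N \otimes N$, and the classical $\mathrm{Spin}$-equivariant isomorphism $\mathrm{Cl}(V) \simeq \bigoplus_k \wedge^k V$ (together with the standard decomposition of spinor bilinears) shows that $\mathfrak{gl}(N)$ splits as a direct sum of exterior powers $\wedge^k \mathbb{R}^d$ (with multiplicities). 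Since $\mathfrak{g}_s$ is a $\mathrm{Spin}$-submodule, it inherits a decomposition of this form $\mathfrak{g}_s \simeq \bigoplus_i \wedge^{n_i}\mathbb{R}^d$.

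For $\mathfrak{g}_v$, I would start from the well-known $\mathrm{Spin}(d-1,1)$-irreducible decomposition
\[
  \mathfrak{gl}(\mathbb{R}^d) \;\simeq\; \wedge^2\mathbb{R}^d \,\oplus\, \mathrm{Sym}^2_0(\mathbb{R}^d) \,\oplus\, \mathbb{R}
\]
into the antisymmetric part (which is the adjoint $\mathfrak{so}(d-1,1)$), the traceless symmetric part, and the trace. Since $K : \mathfrak{g}_s \twoheadrightarrow \mathfrak{g}_v$ is $\mathrm{Spin}$-equivariant and $\mathfrak{g}_s$ is a sum of exterior powers by the previous step, Schur's lemma forces the image $\mathfrak{g}_v$ to land in the sum of those irreducible summands of $\mathfrak{gl}(\mathbb{R}^d)$ that are themselves isomorphic to exterior powers of $\mathbb{R}^d$. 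Among the three summands above, only $\wedge^2\mathbb{R}^d \simeq \mathfrak{so}(d-1,1)$ and $\mathbb{R} \simeq \wedge^0\mathbb{R}^d$ are exterior powers; the traceless symmetric representation $\mathrm{Sym}^2_0(\mathbb{R}^d)$ is not isomorphic to any $\wedge^k\mathbb{R}^d$ (as one sees either from highest weights or, in each relevant $d$, by a direct dimension count). Hence $\mathfrak{g}_v \subseteq \mathfrak{so}(d-1,1) \oplus \mathbb{R}$. For the reverse inclusion, the infinitesimal $\mathrm{Spin}$-action on $\mathbb{R}^{d-1,1\vert N}$ realizes $\mathfrak{so}(d-1,1)$ inside $\mathfrak{g}_v$, while the one-parameter family of dilations $(\lambda \cdot \mathrm{id}_N,\, 2\lambda \cdot \mathrm{id}_{\mathbb{R}^d})$ is a family of automorphisms (because the bracket scales with weight $2$ in the spinor direction) whose image in $\mathfrak{g}_v$ spans the trivial summand.

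The main obstacle is the classical identification of $\mathfrak{gl}(N)$ as a sum of exterior powers: this is a signature-sensitive statement that rests on the existence and symmetry properties of the $\mathrm{Spin}$-invariant spinor bilinear pairing on $N$ in each relevant dimension. For the Majorana and Majorana--Weyl representations that enter the super-Minkowski algebras of the main theorem, these facts are standard (e.g.\ from Lawson--Michelsohn); once they are granted, the Schur argument for $\mathfrak{g}_v$ and the exhibition of explicit Lorentz and dilation automorphisms are routine.
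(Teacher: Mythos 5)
Your argument follows essentially the same route as the paper's proof: you identify $\mathfrak{g}_s$ as a $\mathrm{Spin}$-submodule of $\mathrm{End}(N)\hookrightarrow\mathrm{Cl}(\mathbb{R}^{d-1,1})\otimes\mathbb{C}\simeq\bigoplus_k\wedge^k\mathbb{R}^d$, and then use the $\mathrm{Spin}$-equivariance and surjectivity of $K$ together with a Schur-type argument to exclude $\mathrm{Sym}^2_0(\mathbb{R}^d)$ from $\mathfrak{g}_v$, with the $\mathfrak{so}(d-1,1)\oplus\mathbb{R}$ lower bound realized by the Lorentz and dilation automorphisms, exactly as in the paper. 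The only cosmetic difference is that the paper first restricts to the (symplectic) Majorana case and then extends to direct sums, whereas you phrase the Clifford-module step generically and only flag the signature-sensitivity at the end; the substance is identical.
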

\begin{proof}
  First assume that $N$ is a Majorana spinor representation as in Example
  \ref{MajoranaRepresentations}, and consider $\mathfrak{g}_s$.  Since the Majorana
  representation $N$ is a real subrepresentation of a complex Dirac representation
  $\mathbb{C}^{\mathrm{dim}_{\mathbb{R}}(N)}$ there is a canonical
  $\mathbb{R}$-linear inclusion
  $$
    \mathrm{End}_{\mathbb{R}}(N)
      \hookrightarrow
    \mathrm{End}_{\mathbb{C}}(\mathbb{C}^{\mathrm{dim}_{\mathbb{R}}(N)})
    \,.
  $$
  Therefore it is sufficient to note that the space of endomorphisms of the Dirac representation over
  the complex numbers decomposes into a direct sum of exterior powers of the vector representation.
  This is indeed so, thanks to the inclusion:
  $$
    \mathrm{End}_{\mathbb{C}}(\mathbb{C}^{\mathrm{dim}_{\mathbb{R}}(N)  })
    \hookrightarrow
    \mathrm{Cl}(\mathbb{R}^{d-1,1})\otimes \mathbb{C}
    \,.
  $$
  Explicitly, in terms of the Dirac Clifford basis of
  Example \ref{MajoranaRepresentations}, the decomposition is given by the usual component formula:
  $$
    \psi \otimes \overline{\phi}
    \; \mapsto \;
    \overline{\phi}\psi
    +
    \left(\overline{\phi} \Gamma_ a\psi\right) \Gamma^a
    +
    \tfrac{1}{2}
    \left(\overline{\phi} \Gamma_{a b} \psi\right) \Gamma^{a b}
    +
    \tfrac{1}{3!}\left(\overline{\phi} \Gamma_{a_1 a_2 a_3} \psi\right) \Gamma^{a_1 a_2 a_3}
    +
    \cdots
    \,.
  $$

  Now consider $\mathfrak{g}_v$.
  Recall that, by definition, the automorphism group of $\mathbb{R}^{d-1,1\vert N}$ is
  $$
    \mathrm{Aut}(\mathbb{R}^{d-1,1\vert N})
     :=
    \left\{
      (f,g) \in \GL(N) \times \GL(\R^{d-1,1}) 
      \; : \;
        [f(\psi), f(\phi)] = g[\psi,\phi] \mbox{ for } \psi, \phi \in N
    \right\}
  $$
  and its Lie algebra is
  $$
    \mathfrak{aut}(\mathbb{R}^{d-1,1\vert N})
    =
    \left\{
      (X,Y) \in \mathfrak{gl}(N) \oplus \mathfrak{gl}(\R^{d-1,1})
      \; : \;
      [X \psi, \phi] + [\psi, X \phi] = Y[\psi, \phi]
      \mbox{ for } \psi, \phi \in N
    \right\}
    \,.
  $$
  As we noted above, $\mathrm{Aut}(\mathbb{R}^{d-1,1\vert N})$ always contains
  $\mathrm{Spin}(d-1,1))$, acting canonically, since the spinor-to-vector pairing is
  $\mathrm{Spin}(d-1,1)$-equivariant. Another subgroup of automorphisms that exists
  generally is a copy of the multiplicative group of real numbers $\mathbb{R}^\times$
  where $t \in \mathbb{R}^\times$ acts on spinors $\psi$ as rescaling by $t$ and on
  vectors $v$ as rescaling by $t^2$:
  $$
      \psi \mapsto t \psi, \quad v \mapsto t^2 v. 
  $$
  The Lie algebra of this scaling action is the scaling derivations of Example \ref{scaling}.
  Hence for all $d$ and $N$ we have the obvious Lie algebra inclusion
  $$
    \mathfrak{so}(d-1,1) \oplus \mathbb{R}
      \hookrightarrow
    \mathfrak{aut}(\mathbb{R}^{d-1,1\vert N})
    \,.
  $$
  This shows that there is an inclusion
  $$
    \mathfrak{so}(d-1,1) \oplus \mathbb{R} \hookrightarrow \mathfrak{g}_v \hookrightarrow \mathfrak{gl}(\mathbb{R}^{d})
    \,.
  $$
  Hence it now only remains to see that there is no further summand in $\mathfrak{g}_v$.
  But we know that there is at most one further summand in $\mathfrak{gl}(\R^{d-1,1})$,
  since this decomposes in the form
  $$
    \mathfrak{gl}(\R^{d-1,1})
       \simeq
    \mathfrak{so}(d-1,1) \oplus \mathbb{R} \oplus
      \Sym^2_0(\R^{d-1,1}), 
  $$
  where $\Sym^2_0(\R^{d-1,1})$ denotes the space of traceless, symmetric $d \times d$
  matrices. It follows that the only further summand that could appear in
  $\mathfrak{g}_v$ is $\Sym^2_0(\R^{d-1,1})$.  But by Lemma
  \ref{AutomorphismsAsAGraph}, the homomorphism $K \maps \mathfrak{g}_s \to
  \mathfrak{g}_v$ is surjective, so its image $\g_v$ must be a subset of the exterior
  powers appearing in $\g_s$.  Since the symmetric traceless matrices and the
  exterior powers $\Lambda^\bullet \mathbb{R}^d$ are distinct irreducible
  representations of $\Spin(d-1,1)$, we conclude $\Sym^2_0(\R^{d-1,1})$ is not a summand of
  $\g_v$.

  This concludes the proof for the case that $N$ is a Majorana representation. The
  argument for $N$ symplectic Majorana (Ex. \ref{MajoranaRepresentations}) is
  similar. Finally, a general real spin representation is a direct multiple of $N$ or
  a sum of multiples of the two Weyl representations $N \simeq N_- \oplus N_+$. We
  generalize to these cases in turn.

  First, we consider $nN$, a direct multiple of $N$, for $n$ some nonnegative
  integer. Since $\End_\R(nN) \simeq n^2 \End_\R(N)$, the left hand side is indeed a
  sum of exterior powers.

  Next, if $N$ decomposes as $N_- \oplus N_+$, a general spin representation is of
  the form $n_- N_- \oplus n_+ N_+$, for $n_-$ and $n_+$ nonnegative integers. We wish to show that
  \[ \End_\R(n_- N_- \oplus n_+ N_+) \simeq n_-^2 \End_\R(N_-) \, \oplus \, n_- n_+ \Hom_\R(N_-, N_+) \, \oplus \, n_+ n_- \Hom_\R(N_+, N_-) \, \oplus \, n_+^2 \End_\R(N_+)  \]
  is a sum of exterior powers. Yet we have already shown that
  \[ \End_\R(N_- \oplus N_+) \simeq \End_\R(N_-) \, \oplus \, \Hom_\R(N_-, N_+) \,
  \oplus \, \Hom_\R(N_+, N_-) \, \oplus \, \End_\R(N_+) \]
  is a sum of exterior powers. Thus, every summand on the right hand side is a sum of
  exterior powers, and it follows that $\End_\R(n_-N_- \oplus n_+ N_+)$ is also.

 \end{proof}

\begin{proposition}
  \label{TheAutomorphismLieAlgebras}
  For any dimension $d \geq 3$ and real spinor representation $N$ of $\Spin(d-1,1)$, the
  super-Minkowski super Lie algebra $\mathbb{R}^{d-1,1\vert N}$ (Definition
  \ref{def:SuperMinkowski}) is admissible (Definition \ref{internalsymm}). Moreover,
  the Lie algebra of external automorphisms (Definition \ref{internalsymm}) of
  $\R^{d-1,1|N}$ is the direct sum:
  $$
    \mathfrak{ext}(\mathbb{R}^{d-1,1\vert N})
     \;\simeq\;
     \mathfrak{so}(d-1,1) \oplus \mathbb{R},
  $$
  where $\so(d-1,1)$ acts in the canonical way on $\R^{d-1,1|N}$ (Definition
  \ref{def:SuperMinkowski}) and $\R$ acts by the scaling action from Example
  \ref{scaling}.
\end{proposition}
\begin{proof}
  The admissibility of $\R^{d-1,1|N}$ will follow when we determine
  $\ext(\R^{d-1,1|N})$ has the form claimed, since this form is reductive, and the
  action of $\ext(\R^{d-1,1|N})$ on $\R^{d-1,1|N}$ described in the proposition gives
  the splitting. So we prove this form is correct. By Lemma \ref{AutIsgs}, we have
  $\mathfrak{aut}(\mathbb{R}^{d-1,1\vert N}) \simeq \mathfrak{g}_s$ and by Lemma
  \ref{DecompositionOfGs} we have a decomposition as $\Spin(d-1,1)$-representations
  $$
    \mathfrak{aut}(\mathbb{R}^{d-1,1\vert N})
      \;\simeq\;
    \left(
    \mathfrak{so}(d-1,1)
      \;\oplus\;
    \mathbb{R}
    \right)
      \;\oplus\;
    \underset{= \mathfrak{int}(\mathbb{R}^{d-1,1\vert N})}{\underbrace{\mathrm{ker}(K)}}
    \,,
  $$
  where the last summand is the algebra of internal automorphisms (Definition
  \ref{internalsymm}), hence the R-symmetries (Example \ref{symmetryR}). Therefore
  the claim follows by Definition \ref{internalsymm}.
\end{proof}


\section{The maximal invariant central extensions of the superpoint}
\label{TheConsecutiveExteension}

With the results from the previous section in hand, we have a way of talking about
the $\so(d-1,1)$ symmetries of a super-Minkowski super Lie algebra $\R^{d-1,1|N}$
\emph{purely in terms of its Lie bracket}: it is the algebra of simple external
automorphisms of $\R^{d-1,1|N}$, by Proposition
\ref{TheAutomorphismLieAlgebras}. This allows us to begin with a super Lie algebra
that lacks any apparent relation to spacetime, and discover spacetime symmetries via
the automorphisms. We make repeated use of this in our construction of
super-Minkowski spacetimes by central extension of the superpoint, $\R^{0|1}$. This
is our main result, Theorem \ref{thetheorem}.

To be precise, we compute consecutive `maximal invariant central extensions' of the
superpoint. First we state the definition of the extension process:
\begin{definition}[maximal invariant central extensions]
  \label{MaximalInvariantCentralExtension}
  Let $\mathfrak{g}$ be an admissible super Lie algebra (Definitions
  \ref{internalsymm} and \ref{SuperLieAlg}), let $\mathfrak{h} \hookrightarrow
  \mathfrak{aut}(\mathfrak{g})$ be a subalgebra of its automorphism Lie algebra
  (Proposition \ref{automorphismLieAlg}) and let
  $$
    \xymatrix{
      V \ar@{^{(}->}[r] & \widehat{\mathfrak{g}}
      \ar[d]
      \\
      & \mathfrak{g}
    }
  $$
  be a central extension of $\mathfrak{g}$ by a vector space $V$ in even degree. Then we say that
  $\widehat{\mathfrak{g}}$ is
  \begin{enumerate}
  \item an \emph{$\mathfrak{h}$-invariant central extension}
  if the even 2-cocycles that classify the extension, according to Example \ref{2CocyclesAndCentralExts},
  are $\mathfrak{h}$-invariant 2-cocycles as in Definition \ref{InvariantCohomology};
  \item
  an \emph{invariant central extension} if it is
  $\mathfrak{h}$-invariant and $\mathfrak{h} = \mathfrak{ext}_{\mathrm{simp}}(\mathfrak{g})$
  is the semi-simple part of its external automorphism Lie algebra (Definition \ref{internalsymm});
  \item
  a \emph{maximal $\mathfrak{h}$-invariant central extension} if it is an $\mathfrak{h}$-invariant central extension
  such that the $n$-tuple of $\mathfrak{h}$-invariant even 2-cocycles that classifies it (according to Example \ref{2CocyclesAndCentralExts})
  is a linear basis for the even $\mathfrak{h}$-invariant cohomology $H_{\rm even}^2(\mathfrak{g},\mathbb{R})^{ \mathfrak{h} }$ (Definition \ref{InvariantCohomology}).
  \end{enumerate}
  When the central extension $\widehat{\g}$ is both maximal and invariant, we say it is a
  \emph{maximal invariant central extension} and distinguish it with the symbol
  $\maximalinvariant$ on the projection map $\widehat{\g} \to \g$, like this:
  $$ 
    \xymatrix{
      V \ar@{^{(}->}[r] & \widehat{\mathfrak{g}}
      \ar[d]^{\maximalinvariant}
      \\
      & \mathfrak{g}
    }
  $$
\end{definition}

We now begin to climb the tower of maximal invariant central extensions, beginning
with the superpoint. But first we must note that our starting point is admissible.
\begin{lemma}
  The superpoint $\R^{0|N}$ (Def. \ref{Superpoint}) is admissible
  (Def. \ref{internalsymm}) for any natural number $N \in \mathbb{N}$.
\end{lemma}
\begin{proof}
  All automorphisms of $\R^{0|N}$ are internal (Def. \ref{internalsymm}), so the
  external automorphisms are trivial. Hence, they are trivially reductive, and
  trivially a subalgebra of $\aut(\R^{0|N})$, which is what it means to be
  admissible.
\end{proof}

In our next proposition, we see spacetime appear by extending a superpoint.

\begin{proposition}\label{ProofOfTheFirstStep}
  The maximal invariant central extension (Definition
  \ref{MaximalInvariantCentralExtension}) of the superpoint $\mathbb{R}^{0\vert 2}$
  (Def. \ref{Superpoint}) is the 3-dimensional super Minkowski super Lie algebra
  $\mathbb{R}^{2,1\vert \mathbf{2} }$ as in Definition \ref{def:SuperMinkowski}:
  $$
    \xymatrix{
      \mathbb{R}^3 \ar@{^{(}->}[r] &  \mathbb{R}^{2,1\vert \mathbf{2}} \ar[d]^{\maximalinvariant}
      \\
      & \mathbb{R}^{0\vert 2}
    }
  $$
  with $N = \mathbf{2}$ the unique irreducible real spinor representation of $\mathrm{Spin}(2,1)$ from
  Proposition \ref{RealSpinorsFromTheNormedDivisionAlgbras}.
\end{proposition}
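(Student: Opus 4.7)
The plan is to read off $\mathfrak{ext}_{\mathrm{simp}}$ for the superpoint, compute the resulting degree-two invariant cohomology, and then identify the maximal central extension that it classifies with the asserted super-Minkowski super Lie algebra via the classical isomorphism $\mathrm{Spin}(2,1) \simeq \mathrm{SL}(2,\mathbb{R})$.

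First, I would observe that since the even part of $\mathbb{R}^{0\vert 2}$ is zero, the stabilizer condition in definition \ref{internalsymm} is trivially satisfied by every even automorphism. Hence $\mathfrak{int}(\mathbb{R}^{0\vert 2}) = \mathfrak{aut}(\mathbb{R}^{0\vert 2})_{\mathrm{even}}$, so $\mathfrak{ext}(\mathbb{R}^{0\vert 2}) = 0$ and in particular $\mathfrak{ext}_{\mathrm{simp}}(\mathbb{R}^{0\vert 2}) = 0$. Under definition \ref{MaximalInvariantCentralExtension} the invariance condition therefore imposes nothing, and the maximal invariant central extension coincides with the maximal plain central extension of $\mathbb{R}^{0\vert 2}$.

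Second, since the super Lie bracket on $\mathbb{R}^{0\vert 2}$ is identically zero, the Chevalley--Eilenberg differential on $\wedge^\bullet (\mathbb{R}^{0\vert 2})^*$ vanishes, and every element in degree two is a cocycle. Because the two odd generators sit in odd degree, their duals live in degree one in the (super)-commutative Chevalley--Eilenberg algebra and commute with each other, so the degree-two component is the symmetric square $\mathrm{Sym}^2\bigl((\mathbb{R}^{0\vert 2})^*\bigr)$, which is three-dimensional. The corresponding universal cocycle is the symmetrization map
\begin{equation}
\mu \colon \mathbb{R}^{0\vert 2} \otimes \mathbb{R}^{0\vert 2} \longrightarrow \mathrm{Sym}^2(\mathbb{R}^{0\vert 2}) \simeq \mathbb{R}^3,
\end{equation}
and the maximal central extension it classifies is the super Lie algebra with odd part $\mathbb{R}^{0\vert 2}$, even part $\mathbb{R}^3$, and bracket $\mu$.

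Finally I would identify this extension with $\mathbb{R}^{2,1\vert \mathbf{2}}$ via the exceptional isomorphism $\mathrm{Spin}(2,1) \simeq \mathrm{SL}(2,\mathbb{R})$: under this identification the defining representation of $\mathrm{SL}(2,\mathbb{R})$ is precisely the real irreducible Majorana representation $\mathbf{2}$ of proposition \ref{RealSpinorsFromTheNormedDivisionAlgbras}, and $\mathrm{Sym}^2(\mathbf{2})$ is canonically isomorphic to the space of real symmetric $2\times 2$ matrices, which carries the Minkowski form of signature $(2,1)$ via $-\det$. Under this identification $\mu$ is exactly the spinor bilinear pairing entering the definition of $\mathbb{R}^{2,1\vert \mathbf{2}}$ in definition \ref{def:SuperMinkowski}. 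The main obstacle is conceptual rather than technical: one must recognize that the smallness of $\mathbb{R}^{0\vert 2}$ forces the invariance condition to be vacuous, so that the problem reduces to the universal symmetric-square computation above; after that, the identification with $\mathbb{R}^{2,1\vert \mathbf{2}}$ is the elementary low-dimensional identity $\mathrm{Sym}^2(\mathbb{R}^2) \cong \mathbb{R}^{2,1}$.
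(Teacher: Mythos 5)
Your proposal is correct and follows essentially the same route as the paper's proof: observing that $\mathfrak{ext}_{\mathrm{simp}}(\mathbb{R}^{0\vert 2})=0$ makes invariance vacuous, computing the space of $2$-cocycles as the three-dimensional space of symmetric bilinear forms on $\mathbb{R}^2$, and then identifying the resulting extension with $\mathbb{R}^{2,1\vert\mathbf{2}}$ via the real-division-algebra picture of proposition \ref{RealSpinorsFromTheNormedDivisionAlgbras}. The only cosmetic difference is that the paper writes out the bracket explicitly in components, $[\psi,\phi]=\tfrac{1}{2}(\psi\phi^\dagger+\phi\psi^\dagger)$, whereas you invoke the abstract identification $\mathrm{Sym}^2(\mathbb{R}^2)\simeq\mathrm{Mat}^{\mathrm{sym}}_{2\times 2}(\mathbb{R})\simeq\mathbb{R}^{2,1}$; these are the same calculation.
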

\begin{proof}
  Since $\R^{0|2}$ is concentrated in odd degree, the external automorphisms are
  trivial: $\mathfrak{ext}(\mathbb{R}^{0\vert 2}) = 0$. Thus, every central extension
  is invariant (Def.  \ref{MaximalInvariantCentralExtension}).

  According to Example \ref{2CocyclesAndCentralExts}, the maximal central extension
  is the one induced by the all of the even super Lie algebra 2-cocycles on
  $\mathbb{R}^{0\vert 2}$.  Since $\mathbb{R}^{0\vert 2}$ is concentrated in odd
  degree and has trivial Lie bracket, an even 2-cocycle in this case is given by a
  symmetric bilinear form on $\mathbb{R}^2$. There is a 3-dimensional real vector
  space of these.  This shows that the underlying super vector space of the maximal
  central extension is $\mathbb{R}^{3\vert 2}$.  It remains to check that the Lie
  bracket is that of 3d super-Minkowski.

  If we let $\{d \theta_1, d \theta_2\}$ denote the canonical basis of the dual space
  $\mathbb{R}^{{0\vert 2}*}$, then the space of even 2-cocycles is spanned by:
$$
  \raisebox{20pt}{
  \xymatrix@R=7pt@C=5pt{
    d \theta^1 \wedge d \theta^1
      &
    d \theta^1 \wedge d \theta^2
    \\
    & d \theta^2 \wedge d \theta^2
  }
  }
  \, , 
$$
where the wedge product is symmetric between these odd elements. By the formula for
the central extension from Example \ref{2CocyclesAndCentralExts}, this means that the
super Lie bracket is given on the spinors $\psi = \left(
  \raisebox{10pt}{\xymatrix@R=3pt{ \psi_1 \\ \psi_2 }}\right)$ and $\phi = \left(
  \raisebox{10pt}{\xymatrix@R=3pt{ \phi_1 \\ \phi_2 }}\right)$ by
$$
  \begin{aligned}
  \left[
    \psi, \phi
  \right]
& =
  \left(
    \raisebox{10pt}{
    \xymatrix@R=3pt@C=3pt{
      \psi_1 \phi_1 & \tfrac{1}{2}(\psi_1 \phi_2 + \phi_1 \psi_2)
      \\
      \tfrac{1}{2}(\psi_1 \phi_2 + \phi_1 \psi_2) & \psi_2 \phi_2
    }}
   \right)
   \;\;\;=\;\;\;
   \tfrac{1}{2}\left(\psi \phi^\dagger +  \psi \phi^\dagger \right),
   \end{aligned}
$$
Comparing this formula to Proposition \ref{RealSpinorsFromTheNormedDivisionAlgbras},
we see this is indeed the spinor-to-vector pairing for the real representation
$\mathbf{2}$ of $\mathrm{Spin}(2,1)$.
\end{proof}

To deduce the maximal invariant central extensions of $\mathbb{R}^{2,1 \vert 2}$, we
use the representation of spinors via the normed division algebras as a key tool. We give
all the details in Proposition \ref{RealSpinorsFromTheNormedDivisionAlgbras} of our
appendix, but for the reader's convenience, we quickly summarize the idea.

There are four real normed division algebras: the real numbers $\R$, the complex
numbers $\C$, the quaternions $\H$, and the octonions $\O$. They have dimensions 1,
2, 4 and 8, respectively. It is a famous fact that the octonions $\O$ are not
associative, while $\R, \C$ and $\H$ are. For $\K$ a normed division algebra of
dimension $k$, we can construct spinors for spacetime of dimension $k+2$. More
precisely, we can cook up two irreducible, real spinor representations of the spin
group $\Spin(k+1,1)$, the double cover of the connected Lorentz group
$\SO_0(k+1,1)$. Both of these spinor representations are defined on the vector space
$\K^2$, but they differ in the action of $\Spin(k+1,1)$:
\[ N_+ = \K^2, \quad N_- = \K^2 . \]
We can also define Minkowski spacetime itself in terms of $\K$, as the space of $2
\times 2$ hermitian matrices over $\K$:
\[ \R^{k+1, 1} := \left\{ \begin{bmatrix} t + x & \overline{y} \\ y & t -
    x \end{bmatrix} : t, x \in \R, \, y \in \K \right\} \]
where $\overline{y} \in \K$ is denotes the conjugate of $y \in \K$. For the more
details, see Proposition \ref{RealSpinorsFromTheNormedDivisionAlgbras}.

Our next lemma relates the construction of spinors from $\K$ and from the
`Cayley--Dickson double', $\K_{\rm dbl}$ (Def. \ref{CayleyDickson}). Roughly, the
Cayley--Dickson double takes a normed division algebra $\K$ of dimension $k$, and
gives the `next' normed division algebra $\K_{\rm dbl}$ of dimension $2k$:
\[ \R_{\rm dbl} = \C, \quad \C_{\rm dbl} = \H, \quad \H_{\rm dbl} = \O . \]
This process breaks down for $\K = \O$, when $\O_{\rm dbl}$ fails to be a division
algebra.

In any case, the Cayley--Dickson double contains the original algebra as a
subalgebra, $\K \subseteq \K_{\rm dbl}$. This means that the $2 \times 2$ hermitian
matrices over $\K$ are a subset of those over $\K_{\rm dbl}$, and hence there is an
inclusion of spacetimes $\R^{k+1,1} \subseteq \R^{2k + 1, 1}$, and a corresponding
inclusion of spin groups $\Spin(k+1,1) \subseteq \Spin(2k+1,1)$. By restricting along
this inclusion, spinor representations of $\Spin(2k+1,1)$ become representations of
$\Spin(k+1,1)$. The next lemma tells us precisely which representations we obtain in
this way.

\begin{lemma}
  \label{branchingOfCDDoubleSpinReps}
  Let $\mathbb{K} \in \{\mathbb{R}, \mathbb{C}, \mathbb{H}\}$ be an associative
  normed division algebra (Example \ref{RCHO}) of dimension $k$, and let
  $\mathbb{K}_{\mathrm{dbl}} \in \{\C, \H, \O\}$ be its Cayley--Dickson double
  (Definition \ref{CayleyDickson}) of dimension $2k$. Let $N_+$ and $N_-$ be the real
  spinor representations defined in terms of $\K$, and let $N_{{\rm dbl}}$ denote
  either of the real spinor representations defined in terms of $\K_{\rm dbl}$, as in
  Proposition \ref{RealSpinorsFromTheNormedDivisionAlgbras}.  Consider the inclusion
  of spin groups $\Spin(k+1,1) \subseteq \Spin(2k + 1, 1)$ induced by the inclusion
  of normed division algebras $\K \subseteq \K_{\rm dbl}$. Restricting along this
  inclusion, the irreducible real $\mathrm{Spin}( 2k+1,1 )$-representation
  $N_{\mathrm{dbl}}$ branches into the direct sum of the two irreducible real
  $\mathrm{Spin}(k+1,1)$-representations $N_+, N_-$:
  $$
    N_{\mathrm{dbl}} \simeq N_+ \oplus N_-
    \,.
  $$
\end{lemma}
\begin{proof}
  We will prove the result for $N_{{\rm dbl}+}$, as the argument for $N_{{\rm dbl}-}$
  will be similar. By Proposition \ref{RealSpinorsFromTheNormedDivisionAlgbras}, the
  spin representation $N_{\mathrm{dbl}+}$ is defined on the real vector space
  $\mathbb{K}_{\mathrm{dbl}}^2$. By Cayley--Dickson doubling (Definition
  \ref{CayleyDickson}), this is given in terms of $\mathbb{K}$ as the direct sum
  $$
    \mathbb{K}_{\mathrm{dbl}}^2
      \simeq
    \mathbb{K}^2 \oplus \mathbb{K}^2 \ell
    \,.
  $$
  This makes it immediate that the first summand $\mathbb{K}^2$ is $N_+$ as a representation of $\Spin(k+1,1)$. 
  We need to show that the second summand is isomorphic to $N_-$. 

  To that end observe, by the relations in the Cayley--Dickson construction
  (Definition \ref{CayleyDickson}), that for $\psi \in \mathbb{K}^2$ and $A \in
  \h_2(\K)$ a $2 \times 2$ hermitian matrix, we have the following identity:
  $$
    \begin{aligned}
      A (\psi \ell)
      & =
      A (\ell \overline{\psi})
      \\
      & =
      \ell( \overline{A} \, \overline{\psi} )
      \\
      & =
      \ell (\overline{A}_L \overline{\psi})
      \\
      & =
      \ell( \overline{A_R \psi} )
      \\
      & =
      (A_R \psi) \ell
      \,,
    \end{aligned}
  $$
  where $A_L$ and $A_R$ denotes the right and left actions of the matrix $A$,
  respectively (Def. \ref{MatricesToLinearOps}), and we have used
  Prop. \ref{LeftAndRight} to relate left and right actions under conjugation.

  Recall from Prop. \ref{RealSpinorsFromTheNormedDivisionAlgbras} that $\Spin(k+1,1)$
  is the subgroup of the Clifford algebra generated by products of pairs of unit
  vectors of the same sign:
  \[ \Spin(k+1, 1) = \langle AB \in \Cl(k+1,1) \, : \, A, B \in \R^{k+1,1}, \,
  \eta(A,A) = \eta(B,B) = \pm 1 \rangle . \]
  It follows from our above calculation that the action of a generator $AB \in
  \Spin(k+1,1)$ on the summand $\mathbb{K}^2 \ell$ is the composition of right
  actions on $\mathbb{K}^2$:
  $$
    \begin{aligned}
      \tilde A_L B_L (\psi \ell)
      & =
      \tilde A( B (\psi \ell) )
      \\
      & =
      \left( \tilde A_R  B_R (\psi) \right) \ell
    \end{aligned}
  $$
  Therefore we are now reduced to showing that this action of $\Spin(k+1,1)$ on $\K^2$:
  $$
    \psi \mapsto \tilde A_R B_R (\psi) \mbox{ for } \psi \in \K^2
  $$
  is isomorphic to the action of $\Spin(k+1,1)$ on $N_-$, which also has the
  underlying vector space $\K^2$:
  $$
    \psi \mapsto A_L \tilde B_L (\psi) \mbox{ for } \psi \in \K^2
    \,.
  $$
  We claim there is an isomorphism given by
  $$
    F \maps \psi \mapsto J \overline{\psi}
    \,,
  $$
  where $J$ is the matrix:
  $$
    J := \left(
      \begin{array}{cc}
        0 & -1
        \\
        1 & 0
      \end{array}
    \right)
    \,.
  $$
  A quick calculation shows that $J$ satisfies the matrix identity:
  $$
    J \overline{A} = - \tilde A J
  $$
  for any $A \in \h_2(\K)$ a $2 \times 2$ hermitian matrix. We use this to show that
  $F$ is indeed an isomorphism:
  $$
    \begin{aligned}
      F( \tilde A_R B_R (\psi) )
      & =
      J \, \overline{ \tilde A_R B_R (\psi) }
      \\
      & =
      J \, \overline{\tilde A}_L \, \overline{B}_L (\overline{\psi} )
      \\
      & =
      J \, \overline{\tilde A} ( \overline{B} \, \overline{\psi} )
      \\
      & =
      A ( \tilde B J \overline{\psi} )
      \\
      & =
      A_L \tilde B_L (F(\psi))
    \end{aligned}
  $$
\end{proof}

Next, for our spinor representation $N_\pm$ constructed from a normed division
algebra, we need to know certain invariants of the spin group action.

\begin{lemma}
  \label{VanishingOfSymmetricInvariants}
  Let $\mathbb{K} \in \{\mathbb{R}, \mathbb{C}, \mathbb{H}, \mathbb{O}\}$ be a normed
  division algebra (Example \ref{RCHO}) of dimension $k$, and let $N_{\pm}$ be the
  real $\mathrm{Spin}(k+1,1)$ spinor representations from Proposition
  \ref{RealSpinorsFromTheNormedDivisionAlgbras}. Then
  \begin{enumerate}
    \item
      $\left(\mathrm{End}(N_{\pm})\right)^{\mathrm{Spin}}
         \simeq
         \left\{
           \begin{array}{ll}
              \mathbb{K} & \mbox{ if } \mathbb{K} \in \{\mathbb{R}, \mathbb{C}, \mathbb{H}\}
              \\
              \mathbb{R} & \mbox{ if } \mathbb{K} = \mathbb{O}
           \end{array}
         \right.
      $
    \item
       $\left(\mathrm{Sym}^2(N_{\pm})\right)^{\mathrm{Spin}} \simeq 0$
  \end{enumerate}
  where the superscript $\Spin$ denotes the subspace left invariant by
  $\Spin(k+1,1)$.
\end{lemma}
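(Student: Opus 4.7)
Both parts of the lemma reduce to Schur's lemma over $\mathbb{R}$ applied to the irreducible real $\mathrm{Spin}(\mathrm{dim}_{\mathbb{R}}(\mathbb{K})+1,1)$-representation $N_\pm$, together with the explicit $\mathbb{K}^2$-model from proposition \ref{RealSpinorsFromTheNormedDivisionAlgbras}. For part (1), Schur's lemma identifies $\mathrm{End}_{\mathrm{Spin}}(N_\pm)$ with a finite-dimensional real associative division algebra, hence with $\mathbb{R}$, $\mathbb{C}$, or $\mathbb{H}$ by Frobenius's theorem. For the associative cases $\mathbb{K} \in \{\mathbb{R}, \mathbb{C}, \mathbb{H}\}$, I would exhibit the embedding $\mathbb{K} \hookrightarrow \mathrm{End}_{\mathrm{Spin}}(N_\pm)$ via \emph{right} multiplication on $N_\pm = \mathbb{K}^2$: since the Clifford generators act by left multiplication with $2 \times 2$ Hermitian matrices over $\mathbb{K}$, associativity of $\mathbb{K}$ makes right-$\mathbb{K}$-multiplication commute with this action, hence with the entire $\mathrm{Spin}$-action, and a dimension count against the Frobenius list completes the identification. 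In the non-associative case $\mathbb{K} = \mathbb{O}$ the same construction fails, and a short direct check on $\mathbb{O}^2$ shows the commutant collapses to $\mathbb{R}$, matching the classical fact that the real $\mathbf{16}$ of $\mathrm{Spin}(9,1)$ is absolutely irreducible.

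For part (2), I would identify a $\mathrm{Spin}$-invariant element of $\mathrm{Sym}^2 N_\pm$ with a symmetric $\mathrm{Spin}$-invariant bilinear form on $N_\pm$. In each of the four cases, $N_\pm$ is self-dual via a canonical $\mathrm{Spin}$-invariant pairing $\omega \colon N_\pm \otimes N_\pm \to \mathbb{R}$ built from the symplectic structure on $\mathbb{K}^2$, and $\omega$ is manifestly \emph{anti}symmetric (a direct check using $\mathrm{Re}(ab) = \mathrm{Re}(ba)$ for $a,b \in \mathbb{K}$). Fixing the resulting self-duality $N_\pm \simeq N_\pm^*$, the entire space of $\mathrm{Spin}$-invariant bilinear forms on $N_\pm$ becomes identified with $\mathrm{End}_{\mathrm{Spin}}(N_\pm)$, of $\mathbb{R}$-dimension $\mathrm{dim}_{\mathbb{R}} \mathbb{K}$ for $\mathbb{K} \in \{\mathbb{R},\mathbb{C},\mathbb{H}\}$ (respectively $1$ for $\mathbb{K} = \mathbb{O}$); it then remains to check that the transpose involution on this space acts by $-1$, so that every invariant form inherits the antisymmetry of $\omega$. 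Concretely, one verifies $\omega(\psi, E\phi) = -\omega(E\phi, \psi)$ for $E$ a generator of the right-$\mathbb{K}$-action on $\mathbb{K}^2$, a short computation parallel to the spinor-bracket computation in the proof of proposition \ref{ProofOfTheFirstStep}.

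The main obstacle is the uniform treatment across the four division algebras: the three associative cases are essentially routine Schur-lemma plus symplectic-form bookkeeping, but the octonionic case requires more care, since non-associativity both breaks the clean right-action argument used in part (1) and forces the space of invariant forms (and hence the check to be performed in part (2)) to drop to dimension one. The uniform argument must be patched either by a direct computation in $\mathbb{O}^2$ with careful parenthesization of octonionic products, or by invoking lemma \ref{branchingOfCDDoubleSpinReps} to relate the octonionic $N_\pm$ to the quaternionic case already in hand.
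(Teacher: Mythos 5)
Your part (1) is essentially the paper's argument: Schur's lemma plus Frobenius pins the commutant down to one of $\mathbb{R},\mathbb{C},\mathbb{H}$, and then one identifies which. Where the paper settles the identification by citing Varadarajan's tables, you would exhibit the right-$\mathbb{K}$-action on $\mathbb{K}^2$ directly; that is a reasonable alternative and both routes work.

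Your part (2) has a genuine gap. You assert that in \emph{each} of the four cases $N_\pm$ is self-dual via a $\mathrm{Spin}$-invariant antisymmetric pairing $\omega\colon N_\pm \otimes N_\pm \to \mathbb{R}$ coming from the symplectic structure on $\mathbb{K}^2$. That is false for $\mathbb{K} \in \{\mathbb{H}, \mathbb{O}\}$. The invariant pairing from proposition \ref{RealSpinorsFromTheNormedDivisionAlgbras} goes $N_+ \otimes N_- \to \mathbb{R}$, so $N_\pm \simeq N_\mp^\ast$; and for $\mathbb{K} \in \{\mathbb{H},\mathbb{O}\}$ the two chiral representations $N_+$ and $N_-$ are \emph{not} isomorphic (they are the two Majorana--Weyl halves in signatures $(5,1)$ and $(9,1)$), hence $N_\pm$ is not self-dual at all. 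Concretely, the naive symplectic form $\mathrm{Re}(\psi_1\phi_2 - \psi_2\phi_1)$ is not $\mathrm{SL}(2,\mathbb{H})$-invariant: the identity $\psi^T J \phi \mapsto \det(A)\,\psi^T J\phi$ uses commutativity of the ground field. So your self-duality premise collapses, and with it the subsequent claim that the space of invariant bilinear forms on $N_\pm$ is $\dim_{\mathbb{R}}\mathbb{K}$-dimensional --- in these two cases the space is zero-dimensional, which is exactly what the paper exploits: for $\mathbb{K} \in \{\mathbb{H},\mathbb{O}\}$ non-self-duality of $N_\pm$ already forces $(\mathrm{Sym}^2 N_\pm)^{\mathrm{Spin}} = 0$ with no further computation.

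There is a secondary error in the cases $\mathbb{K} \in \{\mathbb{R},\mathbb{C}\}$, where $N_+ \simeq N_-$ and your self-duality does hold. You want the $\omega$-adjoint involution $E \mapsto E^\ast$ on $\mathrm{End}_{\mathrm{Spin}}(N_\pm)$ to act as $-1$; but symmetric invariant forms $\beta_E(\psi,\phi) := \omega(E\psi,\phi)$ correspond to \emph{anti}-self-adjoint $E$ (one checks $\beta_E$ symmetric $\iff E^\ast = -E$), so if the involution were $-1$ every invariant form would be symmetric, the opposite of what you want. In fact for the right-$\mathbb{K}$-action one finds $E^\ast = +E$ (the involution is the identity), whence the only anti-self-adjoint $E$ is $0$. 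The ``verification'' you propose, $\omega(\psi,E\phi) = -\omega(E\phi,\psi)$, is just the antisymmetry of $\omega$ and constrains $E$ not at all; the identity you would actually need is $\omega(E\psi,\phi) = \omega(\psi,E\phi)$. The paper sidesteps all of this by reading the (anti)symmetry of invariant pairings directly out of Varadarajan's classification.
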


\begin{proof}
  For part 1, the algebra of $\Spin(k+1,1)$-equivariant real linear endomorphisms of
  $N_\pm$:
  \[ \End_{\Spin(k+1,1)}(N_\pm) = \left(\End(N_\pm)\right)^{\Spin} \]
  is called the \emph{commutant} of $N_\pm$. For an irreducible representation such
  as $N_\pm$, Schur's lemma tells us the commutant must be an associative division
  algebra. By the Frobenius theorem, the only associative real division algebras are
  $\R,\C$ and $\H$. We must now determine which case occurs, but this is done by
  Varadarajan \cite[Theorem 6.4.2]{Varadarajan}.

  For part 2, recall from Proposition \ref{RealSpinorsFromTheNormedDivisionAlgbras}
  that we have an invariant pairing
  \[ \langle -, - \rangle \maps N_+ \tensor N_- \to \R . \]
  Thus $N_\pm \simeq N_\mp^*$, and in particular, $\Sym^2 N_\pm \simeq \Sym^2
  N_\mp^*$. But the latter is the space of symmetric pairings:
  \[ \Sym^2 N_\mp \to \R , \]
  which is a subspace of the space of all pairings on $N_\mp$. The invariant elements
  of the space of all pairings are tabulated according to dimension and signature mod
  8 by Varadarajan \cite[Theorem 6.5.10]{Varadarajan}. In particular, for $\K = \R,
  \C$ where $N_\mp = \K^2$ are the spinors in signature $(2,1)$ and $(3,1)$
  respectively, the nonzero invariant pairings are antisymmetric, so
  $\left(\Sym^2(N_\pm)\right)^{\Spin} = 0$. For $\K = \H, \O$, where $N_\mp = \K^2$
  is the space of spinors in signature $(5,1)$ and $(9,1)$ respectively, $N_\mp$ is
  not self-dual, so there are no nonzero invariant pairings, and again we conclude
  $\left(\Sym^2(N_\pm)\right)^{\Spin} = 0$.
\end{proof}

Combining the previous two lemmas, we can prove a surprising relationship between the
Cayley--Dickson double and maximal invariant central extension: they are in essence
the same! More precisely, the maximal invariant central extension of $\R^{k+1,1|N_+
\oplus N_-}$, constructed from the normed division algebra $\K$, is given by $\R^{2k
+ 1, 1|N_{\rm dbl}}$, constructed from the Cayley--Dickson double $\K_{\rm dbl}$.

\begin{proposition}
  \label{TheExtensionsFrom3to10}
  Let $\mathbb{K} \in \{\mathbb{R}, \mathbb{C}, \mathbb{H}\}$ be an associative
  normed division algebra (Example \ref{RCHO}) of dimension $k$, and let
  $\mathbb{K}_{\mathrm{dbl}} \in \{\C, \H, \O\}$ be its Cayley--Dickson double
  (Definition \ref{CayleyDickson}) of dimension $2k$. Then the maximal invariant
  central extension of $\mathbb{R}^{k+1,1 \vert N_+ \oplus N_- }$, with $N_{\pm}$ the
  irreducible real spinor representations constructed from $\K$ as in Proposition
  \ref{RealSpinorsFromTheNormedDivisionAlgbras}, is $\mathbb{R}^{2k +1,1 \vert
  N_{\mathrm{dbl}}}$:
  $$
    \xymatrix{
      \mathbb{K} \ar@{^{(}->}[r] & \mathbb{R}^{2k+1,1 \vert N_{\mathrm{dbl}} }
      \ar[d]^-{\maximalinvariant}
      \\
      & \mathbb{R}^{k+1,1\vert N_+ \oplus N_-},
    }
  $$
  for $N_{\rm dbl}$ either of the irreducible real spinor
  representations induced by the Cayley--Dickson double $\K_{\rm dbl}$.

\end{proposition}
\begin{proof}
  By Proposition \ref{TheAutomorphismLieAlgebras}, we need to compute the even
  $\mathfrak{so}(k+1,1)$-invariant cohomology of $\mathbb{R}^{k+1,1|N_+ \oplus N_-}$
  in degree 2. Such even Lorentz invariant 2-cocycles need to pair two spinors in
  $N_+ \oplus N_-$: there is no even pairing between spinors in $N_+ \oplus N_-$ and
  vectors in $\R^{k+1,1}$, and no antisymmetric Lorentz invariant pairing between
  vectors in $\R^{k+1,1}$. Due to the simple nature of the Lie bracket on
  super-Minkowski spacetime, this means that we need to compute the space of
  $\mathfrak{so}(k+1,1)$-invariant symmetric bilinear forms on $N_+ \oplus N_-$,
  because every symmetric bilinear form on $N_+ \oplus N_-$ is an even 2-cocycle.

  We now apply Lemma \ref{branchingOfCDDoubleSpinReps} to produce these
  Lorentz-invariant 2-cocycles. Namely, let $v \in \mathbb{R}^{2k+1,1 }$ be any
  vector in the orthogonal complement of $\mathbb{R}^{k+1,1}$. Then the symmetric
  pairing
  \[
  \begin{array}{ccc}
    N_{\mathrm{dbl}} \otimes N_{\mathrm{dbl}} & \to & \mathbb{R} \\
    \psi \otimes \phi & \mapsto & \eta( v,[\psi,\phi] )
  \end{array}
  \]
  is clearly $\mathrm{Spin}(k+1,1)$-invariant, by the equivariance of the spinor
  pairing (Proposition \ref{RealSpinorsFromTheNormedDivisionAlgbras}) and the
  assumption on $v$. But by Lemma \ref{branchingOfCDDoubleSpinReps},
  $N_{\mathrm{dbl}}$ is $N_+ \oplus N_-$ as a $\mathrm{Spin}(k+1,1)$-representation.
  Therefore this construction yields a $k$-dimensional space of Spin-invariant
  symmetric bilinear pairings on $N_+ \oplus N_-$.  Moreover, by the definition of
  the pairing above, it follows that the central extension classified by these
  pairings, regarded as 2-cocycles, is $\mathbb{R}^{2k+1,1\vert
  \mathrm{N}_{\mathrm{dbl}}}$.

  To conclude the proof, it remains to show this invariant extension is maximal,
  hence that the dimension of the space of all invariant symmetric pairings on $N_+
  \oplus N_-$ is $k$. The space of all symmetric pairings, invariant or not, is:
  $$
    \mathrm{Sym}^2(N_+ \oplus N_-)
     \;\simeq\;
    \mathrm{Sym}^2(N_+)
      \;\oplus\;
    N_+ \otimes N_-
      \;\oplus\;
    \mathrm{Sym}^2(N_-)
    \,.
  $$
  So, we seek the invariant elements of the latter space.  By Lemma
  \ref{VanishingOfSymmetricInvariants}, the invariant subspaces of
  $\mathrm{Sym}^2(N_\pm)$ vanish.  Therefore the space of invariant 2-cocycles is
  the space of invariant elements in $N_+ \otimes N_-$. By the spinor-to-scalar pairing
  from Prop. \ref{RealSpinorsFromTheNormedDivisionAlgbras} the two spaces $N_+$ and
  $N_-$ are dual to each other as $\mathrm{Spin(k+1,1)}$-representations.  Therefore
  the invariant elements in $N_+ \otimes N_-$ are equivalently the
  equivariant linear endomorphisms of $N_+$:
  $$
    N_+ \to N_+
    \,.
  $$
  By Lemma \ref{VanishingOfSymmetricInvariants} this space of invariant endomorphisms is identified with $\mathbb{K}$
  $$
    \left(\mathrm{End}(N_+)\right)^{\mathrm{Spin}} \simeq \mathbb{K}
    \,.
  $$
  Hence the dimension of this space is $k$, which concludes the proof.
\end{proof}

\begin{proposition}
  \label{11dExtension}
  The maximal invariant central extension (Definition \ref{MaximalInvariantCentralExtension})
  of the type IIA super-Minkowski spacetime $\mathbb{R}^{9,1\vert \mathbf{16} \oplus \overline{\mathbf{16}}}$
  is $\mathbb{R}^{10,1\vert \mathbf{32}}$:
  $$
    \xymatrix{
      \mathbb{R} \ar@{^{(}->}[r] & \mathbb{R}^{10,1\vert \mathbf{32}}
      \ar[d]^{\maximalinvariant}
      \\
      & \mathbb{R}^{9,1\vert \mathbf{16} \oplus  \overline{\mathbf{16}}}
    }
  $$
\end{proposition}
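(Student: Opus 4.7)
The plan is to parallel the proof of Proposition \ref{TheExtensionsFrom3to10} closely, with the one essential difference that the Cayley--Dickson doubling step cannot be used (since doubling the octonions destroys the alternative algebra structure needed for $\Spin$ representations). Instead, the role of the doubling is played by the standard branching of the Majorana representation $\mathbf{32}$ of $\Spin(10,1)$ into $\mathbf{16}\oplus\overline{\mathbf{16}}$ under the subgroup inclusion $\Spin(9,1)\hookrightarrow\Spin(10,1)$. First I would invoke Proposition \ref{TheAutomorphismLieAlgebras} to identify the semi-simple part of $\mathfrak{ext}(\R^{9,1\vert\mathbf{16}\oplus\overline{\mathbf{16}}})$ with $\so(9,1)$, so that the maximal invariant central extension is classified by $H^2(\R^{9,1\vert\mathbf{16}\oplus\overline{\mathbf{16}}},\R)^{\Spin(9,1)}$. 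By the same argument already implicit in Proposition \ref{TheExtensionsFrom3to10}, a $\Spin(9,1)$-invariant $2$-cocycle on this super-Minkowski algebra reduces to a $\Spin(9,1)$-invariant symmetric pairing on the spinor part: the bosonic antisymmetric piece vanishes (the only invariant bilinear on $\R^{9,1}$ is the symmetric metric), and the mixed vector-spinor piece is forced to be trivial since vector-spinor brackets vanish in super-Minkowski.

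Second, I would decompose the relevant symmetric square as
\[ \Sym^2(\mathbf{16}\oplus\overline{\mathbf{16}}) \;\simeq\; \Sym^2(\mathbf{16}) \;\oplus\; (\mathbf{16}\tensor\overline{\mathbf{16}}) \;\oplus\; \Sym^2(\overline{\mathbf{16}}) \]
and apply Lemma \ref{VanishingOfSymmetricInvariants}(2) to conclude that the two outer summands carry no $\Spin(9,1)$-invariants. By the nondegenerate spinor pairing of Proposition \ref{RealSpinorsFromTheNormedDivisionAlgbras}, $\overline{\mathbf{16}}\simeq\mathbf{16}^\ast$ as $\Spin(9,1)$-representations, whence $(\mathbf{16}\tensor\overline{\mathbf{16}})^{\Spin} \;\simeq\; \End(\mathbf{16})^{\Spin}$. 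By Lemma \ref{VanishingOfSymmetricInvariants}(1) with $\K=\O$, this commutant is one-dimensional over $\R$. Hence the maximal invariant central extension adds exactly one bosonic generator.

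Third, I would identify this one-dimensional extension with $\R^{10,1\vert\mathbf{32}}$ by exhibiting the generator of the invariant cohomology directly from the 11d bracket. Let $v$ be a unit vector in the $11$-direction of $\R^{10,1}$ orthogonal to the embedded $\R^{9,1}$. Then
\[ \omega(\psi,\phi) \;:=\; \eta\bigl(v,\,[\psi,\phi]_{\R^{10,1\vert\mathbf{32}}}\bigr) \]
is a $\Spin(9,1)$-invariant symmetric pairing on $\mathbf{32}\simeq\mathbf{16}\oplus\overline{\mathbf{16}}$, and it is nonzero, so it spans the one-dimensional invariant cohomology computed above. The central extension it classifies is then tautologically $\R^{10,1\vert\mathbf{32}}$: the new bosonic generator is translation along $v$, and the extended spinor bracket is the $v$-component of the 11d bracket, with the remaining $10$ components already present in the $\R^{9,1}$ factor of the base.

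The main obstacle is the matching step at the end: the dimension count alone produces an abstract $1$-dimensional extension, and one must ensure it really is isomorphic to 11d super-Minkowski. This reduces to two standard inputs, namely the branching rule $\mathbf{32}\vert_{\Spin(9,1)}\simeq \mathbf{16}\oplus\overline{\mathbf{16}}$ (which follows from the Clifford algebra isomorphism $\Cl(10,1)\simeq\Cl(9,1)\oplus\Cl(9,1)$) together with the fact that the component of the 11d spinor pairing along $v$ is precisely a nonzero $\Spin(9,1)$-equivariant element of $\mathbf{16}\tensor\overline{\mathbf{16}}$. Both are essentially Clifford-algebra bookkeeping, so the substantive content of the proof lies in the cohomological count, not the identification.
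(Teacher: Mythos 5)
Your proof is correct and follows essentially the same route as the paper's: invoke Proposition \ref{TheAutomorphismLieAlgebras} to reduce to $\mathrm{Spin}(9,1)$-invariance, decompose $\mathrm{Sym}^2(\mathbf{16}\oplus\overline{\mathbf{16}})$, kill the outer summands by Lemma \ref{VanishingOfSymmetricInvariants}(2), and identify $(\mathbf{16}\otimes\overline{\mathbf{16}})^{\mathrm{Spin}}\simeq\mathrm{End}(\mathbf{16})^{\mathrm{Spin}}\simeq\mathbb{R}$ via Lemma \ref{VanishingOfSymmetricInvariants}(1) with $\mathbb{K}=\mathbb{O}$. The paper's proof is terser, deferring the decomposition argument and the explicit realization of the generating cocycle $\omega(\psi,\phi)=\eta(v,[\psi,\phi])$ to the earlier Proposition \ref{TheExtensionsFrom3to10}; you spell these steps out and correctly flag that the branching $\mathbf{32}\vert_{\mathrm{Spin}(9,1)}\simeq\mathbf{16}\oplus\overline{\mathbf{16}}$ is a Clifford-algebra fact rather than an instance of Lemma \ref{branchingOfCDDoubleSpinReps} (which stops at $\mathbb{K}=\mathbb{H}$).
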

\begin{proof}
  By Proposition \ref{TheAutomorphismLieAlgebras}, we seek even $\so(9,1)$-invariant
  2-cocycles.  Since the extension in question is clearly $\so(9,1)$-invariant, it is
  sufficient to show that the space of all $\so(9,1)$-invariant 2-cocycles on
  $\mathbb{R}^{9,1\vert \mathbf{16} + \overline{\mathbf{16}}}$ is 1-dimensional.  As
  in the proof of Proposition \ref{TheExtensionsFrom3to10}, that space is
  equivalently the space of $\so(9,1)$-invariant elements in
  \[
    \mathrm{Sym}^2(N_+ \oplus N_-)
      \simeq
    \mathrm{Sym}^2(N_+)
      \;\oplus\;
    N_+ \otimes N_-
      \;\oplus\;
    \mathrm{Sym}^2(N_-)
    \,.
  \]
  By Lemma \ref{VanishingOfSymmetricInvariants}, the invariants in
  $\mathrm{Sym}^2(N_\pm)$ vanish and the space of invariants in $N_+ \otimes N_-$ is
  one-dimensional.
\end{proof}

\noindent Putting together our results in this section, we prove our main theorem.

\begin{theorem}
  \label{thetheorem}
 The process that starts with the superpoint $\mathbb{R}^{0\vert 1}$ and then consecutively
 doubles the supersymmetries and forms the maximal invariant central extension according to
 Definition \ref{MaximalInvariantCentralExtension}
 discovers the super-Minkowski super Lie algebras $\mathbb{R}^{d-1,1\vert N}$ from Definition \ref{def:SuperMinkowski}
 in dimensions $d \in \{3,4,6,10,11\}$ for $N = 1$ and $N = 2$ supersymmetry:
 there is a diagram of super Lie algebras of the following form
$$
  \xymatrix{
    & \mathbb{R}^{10,1\vert \mathbf{32}}
      \ar[dr]|{\maximalinvariant}
    \\
     &
     \mathbb{R}^{9,1\vert \mathbf{16}}
     \ar[dr]|{\maximalinvariant}
     &
    \mathbb{R}^{9,1 \vert \mathbf{16} + \overline{\mathbf{16}}}
    \ar@{<-}@<-3pt>[l]
    \ar@{<-}@<+3pt>[l]
    \\
    &
    \mathbb{R}^{5,1\vert \mathbf{8}}
    \ar[dl]|{\maximalinvariant}
    &
    \mathbb{R}^{5,1 \vert \mathbf{8} + \overline{\mathbf{8}}}
    \ar@{<-}@<-3pt>[l]
    \ar@{<-}@<+3pt>[l]
    \\
    \mathbb{R}^{3,1\vert \mathbf{4}+ \mathbf{4}}
    \ar@{<-}@<-3pt>[r]
    \ar@{<-}@<+3pt>[r]
    &
    \mathbb{R}^{3,1\vert \mathbf{4}}
    \ar[dl]|\maximalinvariant
    \\
    \mathbb{R}^{2,1 \vert \mathbf{2} + \mathbf{2} }
    \ar@{<-}@<-3pt>[r]
    \ar@{<-}@<+3pt>[r]
    &
    \mathbb{R}^{2,1 \vert \mathbf{2}}
    \ar[dl]|{\maximalinvariant}
    \\
    \mathbb{R}^{0 \vert 1+1}
    \ar@{<-}@<-3pt>[r]
    \ar@{<-}@<+3pt>[r]
    &
    \mathbb{R}^{0\vert 1}
  }
$$
where each single arrow $\xymatrix{\ar[r]^{\maximalinvariant}& }$ denotes a maximal invariant central
extension according to Definition \ref{MaximalInvariantCentralExtension} and where
each double arrow denotes the two evident injections (Remark \ref{NumberOfSupersymmetries}).
\end{theorem}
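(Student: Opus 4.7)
The plan is to assemble the theorem as an iterated application of the three extension propositions \ref{ProofOfTheFirstStep}, \ref{TheExtensionsFrom3to10}, and \ref{11dExtension}, walking up the diagram from the superpoint and feeding the output of each stage (after canonical doubling) into the hypothesis of the next. First I would take the superpoint $\mathbb{R}^{0\vert 1}$, double it to $\mathbb{R}^{0\vert 1+1}$, and invoke Proposition \ref{ProofOfTheFirstStep} to produce the first maximal invariant central extension $\mathbb{R}^{2,1\vert \mathbf{2}}$.

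Next I would iterate Proposition \ref{TheExtensionsFrom3to10} three times, once for each normed division algebra in the Cayley--Dickson tower $\mathbb{R} \subset \mathbb{C} \subset \mathbb{H}$, producing in turn
$$
\mathbb{R}^{2,1 \vert \mathbf{2}+\mathbf{2}} \xrightarrow{\maximalinvariant} \mathbb{R}^{3,1 \vert \mathbf{4}},\quad
\mathbb{R}^{3,1 \vert \mathbf{4}+\mathbf{4}} \xrightarrow{\maximalinvariant} \mathbb{R}^{5,1 \vert \mathbf{8}},\quad
\mathbb{R}^{5,1 \vert \mathbf{8}+\overline{\mathbf{8}}} \xrightarrow{\maximalinvariant} \mathbb{R}^{9,1 \vert \mathbf{16}}.
$$
For the final step from dimension $10$ to dimension $11$, I would form the double $\mathbb{R}^{9,1\vert \mathbf{16} + \overline{\mathbf{16}}}$ and apply Proposition \ref{11dExtension} directly to obtain $\mathbb{R}^{10,1\vert \mathbf{32}}$. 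Chaining these together with the double arrows of the diagram representing the two canonical injections of each $\mathbb{R}^{d-1,1\vert N}$ into its doubling $\mathbb{R}^{d-1,1\vert N+N'}$ gives exactly the tower asserted by the theorem.

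The one bookkeeping task required at each stage is to match the canonical doubling $\mathbb{R}^{d-1,1\vert \mathbf{N}+\mathbf{N}'}$ with the particular input $\mathbb{R}^{\dim_{\mathbb{R}}(\mathbb{K})+1,1\vert N_+ \oplus N_-}$ demanded by Proposition \ref{TheExtensionsFrom3to10}. Invoking the explicit normed-division-algebra models of real spin representations from Proposition \ref{RealSpinorsFromTheNormedDivisionAlgbras}, one checks that for $\mathbb{K} = \mathbb{R}, \mathbb{C}$ the two irreducible real spin representations $N_+$ and $N_-$ are isomorphic as $\mathrm{Spin}$-representations, so that the doubling takes the form $\mathbf{N}+\mathbf{N}$, while for $\mathbb{K} = \mathbb{H}$ they are mutually dual but non-isomorphic, so that the doubling reads $\mathbf{N}+\overline{\mathbf{N}}$; this accounts precisely for the pattern of overbars appearing in the diagram.

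I do not anticipate a serious technical obstacle at this stage, since all the substance of the theorem has been packaged into the three preceding propositions. The only delicate point is the aforementioned identification of which copy ($N_+$ versus $N_-$) of the spin representation is produced by the doubling procedure, and this is resolved uniformly by the realization of spinors as columns of $2\times 2$ Hermitian matrices over $\mathbb{K}$ in Proposition \ref{RealSpinorsFromTheNormedDivisionAlgbras}.
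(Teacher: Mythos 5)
Your proposal is correct and takes essentially the same approach as the paper: the paper's own proof simply observes that the theorem is the joint statement of Propositions \ref{ProofOfTheFirstStep}, \ref{TheExtensionsFrom3to10}, and \ref{11dExtension}, together with the fact (from Proposition \ref{RealSpinorsFromTheNormedDivisionAlgbras}) that $N_+ \simeq N_-$ for $\mathbb{K} \in \{\mathbb{R}, \mathbb{C}\}$. You spell out the same chain step by step, with the same bookkeeping about when the doubling appears as $N \oplus N$ versus $N \oplus \overline{N}$.
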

\begin{proof}
  This is the joint statement of Proposition \ref{ProofOfTheFirstStep}, Proposition
  \ref{TheExtensionsFrom3to10} and Proposition \ref{11dExtension}.  Here we use in
  Proposition \ref{TheExtensionsFrom3to10} that for $\mathbb{K} \in \{\mathbb{R},
  \mathbb{C}\}$ a commutative division algebra, the two representations $N_{\pm}$
  from Proposition \ref{RealSpinorsFromTheNormedDivisionAlgbras} are in fact
  isomorphic.
\end{proof}

\section{Outlook}

In view of the brane bouquet \cite{FSS13}, Theorem \ref{thetheorem} is suggestive of
phenomena still to be uncovered. Further corners of M-theory, currently less well
understood, might be found by following the process of maximal invariant central
extensions in other directions. Indeed, note that Theorem \ref{thetheorem} only
exhibits some maximal invariant central extensions. It does not claim that there are
no further maximal central extensions.

For instance, the $N=1$ superpoint $\mathbb{R}^{0 \vert 1}$ also has a maximal
central extension, namely the super-line $\mathbb{R}^{1\vert 1} =
\mathbb{R}^{1,0\vert 1}$
$$
  \xymatrix{
    \mathbb{R}^{1\vert 1}
    \ar[d]^{\maximalinvariant}
    \\
    \mathbb{R}^{0 \vert 1}
  }
$$
This follows immediately with the same argument as in Proposition
\ref{ProofOfTheFirstStep}.

The natural next question is, what is the bouquet of maximal central extensions
emerging out of $\mathbb{R}^{0 \vert 3}$?  It is clear that the first step yields
$\mathbb{R}^{6\vert 3}$, with the underlying even 6-dimensional vector space
canonically identified with the $3 \times 3$ symmetric matrices with entries in the
real numbers.  Now if an analogue of Proposition \ref{TheExtensionsFrom3to10}
continued to hold in this case, then the further consecutive maximal invariant
extensions might involve the $3 \times 3$ hermitian matrices with coefficients in the
complex numbers $\mathbb{C}$, the quaternions $\mathbb{H}$, and finally the octonions
$\mathbb{O}$. The last of these, denoted $\h_3(\O)$ for the $3 \times 3$ hermitian
matrices over $\O$, is the famous exceptional Jordan algebra. Just as $\h_2(\O)$, the
$2 \times 2$ hermitian matrices over $\O$, is isomorphic to Minkowski spacetime
$\mathbb{R}^{9,1}$, so $\h_3(\O)$ is isomorphic to the 27-dimensional direct sum
$\mathbb{R}^{9,1} \oplus \mathbf{16} \oplus \mathbb{R}$ consisting of 10d-spacetime,
one copy of the real 10d spinors and a scalar \cite[Section 3.4]{Baez02}. This kind
of data is naturally associated with heterotic M-theory, and grouping its spinors
together with the vectors and the scalar to a 27-dimensional bosonic space is
reminiscent of the speculations about bosonic M-theory \cite{HorowitzSusskind00}.
Therefore, should the bouquet of maximal invariant extensions truly include
$\h_3(\O)$, this might help to better understand the nature of the bosonic or
heterotic corners of M-theory.

In a similar vein, we ought to ask how the tower of steps in Theorem \ref{thetheorem}
continues beyond dimension 11, and what the resulting structures mean.

\appendix

\section{Background}

For reference, we briefly recall some definitions and facts that we use in the main text.

\subsection{Super Lie algebra cohomology}

We recall the definition of super Lie algebras and their cohomology.  All our vector
spaces and algebras are over $\mathbb{R}$, and they are all finite dimensional. We
write \emph{even} for $0 \in \Z_2$ and \emph{odd} for $1 \in \Z_2$.

\begin{definition}
  \label{SuperLieAlg}
  The \emph{tensor category of super vector spaces} is the category of $\Z_2$-graded
  vector spaces and grade-preserving linear maps, equipped with the unique
  non-trivial braiding, $\tau^{\rm super}$. For any two super vector space $V$ and
  $W$, $\tau^{\rm super}$ is the isomorphism
  \[
  \begin{array}{rcl}
    \tau^{\rm super} \maps V \otimes W & \to & W \otimes V \\
    v_1 \otimes v_2 & \mapsto & (-1)^{\sigma_1 \sigma_2} \, v_2 \otimes v_1 ,
     \end{array}
\]
for elements $v_1 \in V$, $v_2 \in W$ of homogeneous degree $\sigma_i \in \Z_2$.

  A \emph{super Lie algebra} is a Lie algebra internal to super vector spaces. That
  is, it is a super vector space
  $$
    \mathfrak{g} = \mathfrak{g}_{\mathrm{even}} \oplus \mathfrak{g}_{\mathrm{odd}}
  $$
  equipped with a bilinear map, called the \emph{Lie bracket}:
  $$
    [-,-] \maps \mathfrak{g} \otimes \mathfrak{g} \longrightarrow \mathfrak{g}
  $$
  which is graded skew symmetric:
  $$
    [v_1, v_2] = -(-1)^{\sigma_1 \sigma_2} [v_2, v_1]
  $$
  and which satisfies the graded Jacobi identity:
  $$
    [v_1, [v_2,v_3]] = [[v_1, v_2], v_3] + (-1)^{ \sigma_1 \sigma_2 } [v_2, [v_1, v_3]]
    \,.
  $$
  A homomorphism of super Lie algebras $\mathfrak{g}_1 \longrightarrow
  \mathfrak{g}_2$ is a linear map preserving the $\Z_2$-grading and the bracket.
\end{definition}

\begin{definition}[super Lie algebra cohomology]
  \label{CEAlgebra}
  Let $V$ be a finite-dimensional super vector space. Then the \emph{super-Grassmann
  algebra} $\Lambda^\bullet V^\ast$ is the $\mathbb{Z} \times
  (\Z_2)$-bigraded-commutative associative algebra freely generated by $V^*$ in
  degree $1 \in \Z$. That is to say it is generated by the elements in
  $V^\ast_{\mathrm{even}}$ regarded as being in bidegree $(1,\mathrm{even})$, and the
  elements in $V^\ast_{\mathrm{odd}}$ regarded as being in bidegree
  $(1,\mathrm{odd})$, subject to the relation that for $\alpha_i$ two elements of
  homogeneous bidegree $(n_i, \sigma_i)$, then
  $$
    \alpha_1 \wedge \alpha_2
      \;=\;
    (-1)^{n_1 n_2} (-1)^{\sigma_1 \sigma_2} \, \alpha_2 \wedge \alpha_1
    \,.
  $$
  In particular, this relation says that elements of bidegree $(1,{\rm even})$
  anticommute with each other, those of bidegree $(1,{\rm odd})$ commute with each
  other, while an element of bidegree $(1,{\rm even})$ anticommutes with an element
  of bidegree $(1,{\rm odd})$.

  Now let $(\mathfrak{g}, [-,-])$ be a finite-dimensional super Lie algebra. Then its
  \emph{Chevalley--Eilenberg algebra} $\mathrm{CE}(\mathfrak{g})$ is the
  super-Grassmann algebra $\Lambda^\bullet \mathfrak{g}^\ast$ equipped with the
  differential $d_{\rm CE}$ defined as follows. On the generators
  $\mathfrak{g}^\ast$, $d_{\rm CE}$ acts as the linear dual of the Lie bracket:
  $$
    [-,-]^* \maps \mathfrak{g}^\ast \to \Lambda^2 \mathfrak{g}^\ast.
  $$
  The action of $d_{\rm CE}$ on generators is then extended to all of
  $\Lambda^\bullet \g^*$ as a derivation, bigraded of bidegree $(1,\even)$. This makes
  $\CE(\g)$ into a differential graded algebra. A calculation shows $d_{\rm CE}^2 =
  0$, so $\CE(\g)$ is also a cochain complex. 

  For $p \in \mathbb{N}$ we say that a \emph{$(p+2)$-cocycle} on $\mathfrak{g}$ with
  coefficients in $\mathbb{R}$ is a $d_{\mathrm{CE}}$-closed element in
  $\Lambda^{p+2} \mathfrak{g}^\ast$. We say that cocycle is \emph{even} if its degree
  in $\Z_2$ is even, and \emph{odd} if it is odd. The \emph{super Lie algebra
  cohomology} of $\mathfrak{g}$ with coefficients in $\mathbb{R}$ is the cohomology
  of its Chevalley--Eilenberg algebra, regarded as a cochain complex:
  $$
    H^\bullet(\mathfrak{g}, \mathbb{R})
    :=
    H^\bullet(\mathrm{CE}(\mathfrak{g}))
    \,.
  $$
  The $\Z_2$-grading on $\CE(\g)$ makes $H^p(\g, \R)$ into a super vector space for each
  $p$. We will be interested in its even part, $H^p_{\even}(\g, \R)$.
\end{definition}

\begin{example}
  \label{2CocyclesAndCentralExts}
  Let $\mathfrak{g}$ be a finite dimensional super Lie algebra, and let $\omega \in
  \Lambda^2 \mathfrak{g}^\ast$ be an even 2-cocycle as in Definition
  \ref{CEAlgebra}. Then there is a new super Lie algebra $\widehat{\mathfrak{g}}$
  whose underlying super vector space is
  $$
    \widehat{\mathfrak{g}}
      :=
    \underset{\mathrm{even}}{\underbrace{\mathfrak{g}_{\mathrm{even}} \oplus \mathbb{R}}}
     \oplus
    \underset{\mathrm{odd}}{\underbrace{\mathfrak{g}_{\mathrm{odd}}}}
  $$
  and with super Lie bracket given by
  $$
    [ (x_1,c_1), (x_2,c_2) ]
    \;=\;
    ([x_1,x_2], \omega(x_1, x_2))
    \,.
  $$
  We thus have a short exact sequence giving $\widehat{\g}$ as a central extension of
  $\g$:
  \[ 0 \to \R \to \widehat{\g} \to \g \to 0 . \]
  In the paper, we will often write this short exact sequence as follows, in the
  style an algebraic topologist might use to write down a fibration:
  $$
    \xymatrix{
      \mathbb{R} \ar@{^{(}->}[r] & \widehat{\mathfrak{g}}
      \ar[d]
      \\
      & \mathfrak{g} \, .
    }
  $$
  Just as for ordinary Lie algebras, this construction establishes a natural
  equivalence between central extensions of $\mathfrak{g}$ by $\mathbb{R}$ (in even
  degree) and even super Lie algebra 2-cocycles on $\mathfrak{g}$.

  More generally, a central extension in even degree is by some vector space $V
  \simeq \mathbb{R}^n$
  $$
    \xymatrix{
      \mathbb{R}^n \ar@{^{(}->}[r] & \widehat{\mathfrak{g}}
      \ar[d]
      \\
      & \mathfrak{g}
    }
  $$
  which is equivalently the result of extending by $n$ even 2-cocycles, one after the
  other, in any order.
\end{example}

We will be interested not in the full super Lie algebra cohomology, but in the \emph{invariant} cohomology
with respect to some action:

\begin{definition}
  \label{AutomorphismLieGroup}
  For $\mathfrak{g}$ a super Lie algebra (Definition \ref{SuperLieAlg}), its
  \emph{automorphism group} is the Lie subgroup
  $$
    \mathrm{Aut}(\mathfrak{g})
      \hookrightarrow
    \mathrm{GL}(\mathfrak{g}_{\mathrm{even}})
    \times
    \mathrm{GL}(\mathfrak{g}_{\mathrm{odd}})
  $$
  of the group of degree-preserving linear isomorphisms on the underlying vector
  space, consisting of those elements which are super Lie algebra isomorphisms. 
\end{definition}
\begin{proposition}
  \label{automorphismLieAlg}
  For $\mathfrak{g}$ a super Lie algebra, the Lie algebra of its automorphism Lie
  group (Definition \ref{AutomorphismLieGroup})
  $$
    \mathfrak{aut}(\mathfrak{g})
  $$
  is called the the \emph{automorphism Lie algebra} of $\mathfrak{g}$. It is the Lie
  algebra of those linear maps $\Delta \maps \mathfrak{g} \to \mathfrak{g}$ which
  preserve the degree and satisfy the derivation property:
  $$
    \Delta[X,Y] = [\Delta X, Y] + [X, \Delta Y]
  $$
  for all $X, Y\in \mathfrak{g}$. The Lie bracket on $\mathfrak{aut}(\mathfrak{g})$ is
  the commutator:
  $$
    [\Delta_1, \Delta_2] := \Delta_1 \Delta_2 - \Delta_2 \Delta_1
    \,.
  $$
\end{proposition}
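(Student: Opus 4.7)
The plan is to identify $\mathfrak{aut}(\mathfrak{g})_{\mathrm{even}}$ as a Lie subalgebra of $\mathfrak{gl}(\mathfrak{g}_{\mathrm{even}}) \oplus \mathfrak{gl}(\mathfrak{g}_{\mathrm{odd}})$ by differentiating the defining condition of $\mathrm{Aut}(\mathfrak{g})_{\mathrm{even}}$ at the identity, and then to recover the bracket from the ambient general linear group. Since $\mathrm{Aut}(\mathfrak{g})_{\mathrm{even}}$ is defined in definition \ref{AutomorphismLieGroup} as the locus inside $\mathrm{GL}(\mathfrak{g}_{\mathrm{even}}) \times \mathrm{GL}(\mathfrak{g}_{\mathrm{odd}})$ cut out by the polynomial equations $\phi([x,y]) = [\phi(x),\phi(y)]$ as $(x,y)$ ranges over a basis of $\mathfrak{g}\otimes\mathfrak{g}$, it is a closed algebraic subgroup; hence by Cartan's theorem it is a Lie subgroup and its Lie algebra is a Lie subalgebra of $\mathfrak{gl}(\mathfrak{g}_{\mathrm{even}}) \oplus \mathfrak{gl}(\mathfrak{g}_{\mathrm{odd}})$, on which the bracket is the matrix commutator.

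First I would take a smooth curve $\phi\maps (-\epsilon,\epsilon) \to \mathrm{Aut}(\mathfrak{g})_{\mathrm{even}}$ with $\phi_0 = \mathrm{id}$ and set $\Delta := \frac{d}{dt}\big|_{t=0}\phi_t$. Since each $\phi_t$ preserves the $\mathbb{Z}/2$-grading, so does $\Delta$. Differentiating the Lie algebra homomorphism condition
\[
  \phi_t([x,y]) \;=\; [\phi_t(x),\phi_t(y)]
\]
at $t = 0$, using bilinearity of the bracket and $\phi_0 = \mathrm{id}$, yields
\[
  \Delta([x,y]) \;=\; [\Delta(x),y] + [x,\Delta(y)],
\]
which is precisely the derivation property. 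This shows $\mathfrak{aut}(\mathfrak{g})_{\mathrm{even}}$ is contained in the space of degree-preserving derivations.

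For the reverse inclusion, given any degree-preserving derivation $\Delta$, I would form the one-parameter group $\phi_t := \exp(t\Delta)$ in $\mathrm{GL}(\mathfrak{g}_{\mathrm{even}}) \times \mathrm{GL}(\mathfrak{g}_{\mathrm{odd}})$ and verify that each $\phi_t$ is an automorphism. By the Leibniz rule, induction on $n$ gives
\[
  \Delta^n([x,y]) \;=\; \sum_{k=0}^{n} \binom{n}{k}\bigl[\Delta^k(x),\Delta^{n-k}(y)\bigr],
\]
so the power series identity for $\exp(t\Delta)([x,y])$ sums to $[\exp(t\Delta)(x),\exp(t\Delta)(y)]$ in the finite-dimensional setting. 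Hence $\phi_t \in \mathrm{Aut}(\mathfrak{g})_{\mathrm{even}}$ and $\Delta$ lies in its Lie algebra. The step that requires the most care is this Leibniz-to-exponential passage, including the check that $\phi_t$ stays in the identity component of $\mathrm{GL}$; once that is in hand, the two inclusions match.

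Finally, the bracket on $\mathfrak{aut}(\mathfrak{g})_{\mathrm{even}}$ is the restriction of the bracket on $\mathfrak{gl}(\mathfrak{g}_{\mathrm{even}}) \oplus \mathfrak{gl}(\mathfrak{g}_{\mathrm{odd}})$, which is the commutator $[\Delta_1,\Delta_2] = \Delta_1\circ\Delta_2 - \Delta_2\circ\Delta_1$. To confirm closure under this bracket intrinsically, a direct computation shows that if $\Delta_1$ and $\Delta_2$ are derivations then so is their commutator: applying each side to $[x,y]$, the four mixed terms $[\Delta_i(x),\Delta_j(y)]$ cancel in pairs and leave exactly $[[\Delta_1,\Delta_2](x),y] + [x,[\Delta_1,\Delta_2](y)]$. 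This completes the identification of $\mathfrak{aut}(\mathfrak{g})_{\mathrm{even}}$ as the Lie algebra of degree-preserving derivations of $\mathfrak{g}$ under the commutator bracket.
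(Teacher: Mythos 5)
Your proposal is correct. Note that the paper itself states Proposition \ref{automorphismLieAlg} as a background fact with no proof attached, so there is nothing to compare against; your argument is the standard one (closed subgroup of $\mathrm{GL}(\mathfrak{g}_{\mathrm{even}})\times\mathrm{GL}(\mathfrak{g}_{\mathrm{odd}})$, differentiate the homomorphism condition for one inclusion, exponentiate a derivation via the iterated Leibniz formula for the other, commutator bracket inherited from the ambient $\mathfrak{gl}$), and all the steps check out — including the unsigned Leibniz rule and the cancellation of mixed terms, which are legitimate here precisely because the derivations in question are even. The only superfluous remark is the worry about $\phi_t$ landing in the identity component: $\exp(t\Delta)$ is automatically invertible with inverse $\exp(-t\Delta)$, and membership in $\mathrm{Aut}(\mathfrak{g})_{\mathrm{even}}$ does not require lying in any particular component, so there is nothing to check there.
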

We caution the reader that, even though $\g$ is a super Lie algebra, its automorphism
algebra $\aut(\g)$ is merely a Lie algebra. This is because we want elements of
$\aut(\g)$ to preserve the degree on $\g$. 
\begin{example}
  \label{scaling}
  The super-Minkowski super Lie algebras $\mathbb{R}^{d-1,1\vert N}$ from Definition \ref{def:SuperMinkowski}
  all carry an automorphism action of the abelian Lie algebra $\mathbb{R}$ which is spanned
  by the \emph{scaling derivation} that acts on vectors $v \in \mathbb{R}^{d-1,1}$ by
  $$
    v \mapsto 2v
  $$
  and on spinors $\psi \in N$ by
  $$
    \psi \mapsto \psi
    \,.
  $$
\end{example}
\begin{definition}
  \label{InvariantCohomology}
  Let $\g$ be a super Lie algebra (Def. \ref{SuperLieAlg}). Clearly, every
  automorphism of $\g$ will induce an automorphism of the Chevalley--Eilenberg
  algebra $\CE(\g)$ (Def. \ref{CEAlgebra}). Explicitly, this works as follows. Let
  $\Delta \in \aut(\g)$ be an infinitesimal automorphism (Prop.
  \ref{automorphismLieAlg}). The induced automorphism $\Delta_\CE \maps \CE(\g) \to
  \CE(\g)$ acts on the generators $\g^*$ of $\CE(\g)$ as the linear dual $\Delta^*$:
  $$
    \Delta_{\mathrm{CE}} \maps \mathfrak{g}^\ast \overset{\Delta^\ast}{\longrightarrow} \mathfrak{g}^\ast.
  $$
  This is then extended to all of $\CE(\g)$ as a derivation of bidegree
  $(0,\even)$. The fact that $\Delta_\CE$ commutes with $d_\CE$ is equivalent to the
  fact that $\Delta$ is a derivation of $\g$.

  Now let $\mathfrak{h} \hookrightarrow \mathfrak{aut}(\mathfrak{g})$ be a Lie
  subalgebra of its automorphism Lie algebra. The elements of
  $\mathrm{CE}(\mathfrak{g})$ which are annihilated by $\Delta_{\mathrm{CE}}$ for all
  $\Delta \in \mathfrak{h}$ form a differential graded subalgebra of $\CE(\g)$:
  $$
    \mathrm{CE}(\mathfrak{g})^{\mathfrak{h}}
      \hookrightarrow
    \mathrm{CE}(\mathfrak{g})
  $$

  We say an \emph{$\mathfrak{h}$-invariant $(p+2)$-cocycle} on $\mathfrak{g}$ is an
  element in $\mathrm{CE}(\mathfrak{g})^{\mathfrak{h}}$ which is
  $d_{\mathrm{CE}}$-closed and the \emph{$\mathfrak{h}$-invariant cohomology} of
  $\mathfrak{g}$ with coefficients in $\mathbb{R}$ is the cochain cohomology of this
  subcomplex:
  $$
    H^\bullet(\mathfrak{g}, \mathbb{R})^{\mathfrak{h}}
     :=
    H^\bullet(\mathrm{CE}(\mathfrak{g})^{\mathfrak{h}})
    \,.
  $$
  We define even and odd invariant cocycles as before. The vector space $H^p(\g,
  \R)^\h$ is $\Z_2$-graded for each $p$, and our focus will be on its even part,
  $H^p_\even(\g,\R)^\h$.
\end{definition}

\subsection{Super-Minkowski spacetimes}
\label{sec:SuperMinkowski}

We recall the definition of `super-Minkowski super Lie algebras' (Definition
\ref{def:SuperMinkowski}) as well as their construction, on the one hand via
Majorana or symplectic Majorana spinors (Example \ref{MajoranaRepresentations}), and on the
other hand via the four normed division algebras (Proposition
\ref{RealSpinorsFromTheNormedDivisionAlgbras}). We freely use basic facts about
spinors, as may be found in the book of Lawson and Michelsohn \cite{LawsonMichelsohn89}.

\begin{definition}[super-Minkowski Lie algebras]
  \label{def:SuperMinkowski}
  Let $d \in \mathbb{N}$ (spacetime dimension) and let $N$ be a real spinor
  representation of $\Spin(d-1,1)$, the double cover of the connected Lorentz group
  $\SO_0(d-1,1)$.  Then \emph{$d$-dimensional $N$-supersymmetric super-Minkowski
  spacetime} $\mathbb{R}^{d-1,1\vert N}$ is the super Lie algebra (Definition
  \ref{SuperLieAlg}) whose underlying super-vector space is
  $$
    \mathbb{R}^{d-1,1\vert N}
      :=
    \underset{\mathrm{even}}{\underbrace{\mathbb{R}^{d-1,1}}} \oplus \underset{\mathrm{odd}}{\underbrace{N}}.
  $$
  The Lie bracket is nonzero only on $N$, and is a choice of symmetric, bilinear,
  $\Spin(d-1,1)$-equivariant map:
  $$
    [-,-] \maps N \otimes N \longrightarrow \mathbb{R}^{d-1,1} .
  $$
  Such a map is always available in spacetime signature $(d-1,1)$
  \cite{FreedLectures}, though there may be more than one choice \cite{Varadarajan}.

  There is a canonical action of $\mathrm{Spin}(d-1,1)$ on $\mathbb{R}^{d-1,1\vert N}$
  by Lie algebra automorphisms, and the corresponding semidirect product Lie algebra
  is the \emph{super Poincar{\'e} super Lie algebra}
  $$
    \mathfrak{iso}(\mathbb{R}^{d-1,1\vert N})
    =
    \mathbb{R}^{d-1,1\vert N}
      \rtimes
    \mathfrak{so}(d-1,1)
    \,.
  $$
  It is also called the \emph{supersymmetry algebra}. 
\end{definition}
\begin{remark}[number of super-symmetries]
  \label{NumberOfSupersymmetries}
  In the physics literature the choice of real spinor representation in Definition \ref{def:SuperMinkowski}
  is often referred to as the `number of supersymmetries'. While this is imprecise,
  it fits well with the convention of labelling irreducible representations by their
  dimension in boldface. For example when $d = 10$ there are two irreduible real
  spinor representations, both of real dimension 16, but of opposite chirality, and hence
  traditionally denoted $\mathbf{16}$ and $\overline{\mathbf{16}}$. Hence we may speak of
  $N = \mathbf{16}$ supersymmetry (also called $N = 1$, \emph{type I} or \emph{heterotic}) and
  $N = \mathbf{16} \oplus \overline{\mathbf{16}}$ supersymmetry (also called $N = (1,1)$ or \emph{type IIA})
  and $N = \mathbf{16} \oplus \mathbf{16}$ supersymmetry (also called $N = (2,0)$ or \emph{type IIB}).

  In Section \ref{TheConsecutiveExteension} the generalization of the last of these cases plays a central role, where for any
  given real spin representation $N$ we pass to the doubled supersymmetry $N \oplus N$. Observe that the
  two canonical linear injections $N \to N \oplus N$ into the direct sum induce two super Lie algebra
  homomorphisms
  $$
    \xymatrix{
      \mathbb{R}^{d-1,1\vert N}
      \ar@<-3pt>[r]
      \ar@<+3pt>[r]
      &
      \mathbb{R}^{d-1,1\vert N \oplus N}
    }
    \,.
  $$
\end{remark}
The following degenerate variation of super-Minkowski spacetime will play a key role:
\begin{definition}[superpoint]
  \label{Superpoint}
  A \emph{superpoint} is the super Lie algebra
  $$
    \mathbb{R}^{0|N}
  $$
  which has zero Lie bracket, and whose underlying super vector space is concentrated
  in odd degree, where it is of dimension $N$. 
\end{definition}

We will use two different ways of constructing real spin representations, and hence
super-Minkowski spacetimes: via `Majorana' or `symplectic Majorana' spinors (Example
\ref{MajoranaRepresentations}) and via real normed division algebras (Proposition
\ref{RealSpinorsFromTheNormedDivisionAlgbras}).

\begin{example}[Majorana representations]
  \label{MajoranaRepresentations}
  For $d = 2\nu$ or $2 \nu+1$, there exists a complex representation of the Clifford algebra
  $\mathrm{Cl}(\mathbb{R}^{d-1,1})\otimes \mathbb{C}$, hence of the spin group $\mathrm{Spin}(d-1,1)$
  on $\mathbb{C}^{2^\nu}$
  such that
  \begin{enumerate}
    \item all skew-symmetrized products of $p \geq 1$ Clifford elements $\Gamma_{a_1 \cdots a_p}$ are traceless;
    \item $\Gamma_0^\dagger = \Gamma_0$ and $\Gamma_i^\dagger = - \Gamma_i$, for $1 \leq i \leq d - 1$.
  \end{enumerate}
  This is the \emph{Dirac representation}, a complex representation of $\Spin(d-1,1)$.
  For $d = 2\nu$ this is the direct sum of two subrepresentations on $\mathbb{C}^{2^{\nu}-1}$,
  the \emph{Weyl representations}.

  For $d \in \{1,2,3,4,8,9,10,11\}$, there exists a real structure $J$ on the
  complex Dirac representation, restricting to the Weyl representations for $d = 2$
  or $d = 10$. This is a $\mathrm{Spin}(d-1,1)$-equivariant antilinear endomorphism
  $J \maps S \to S$ which squares to the identity: $J^2 = +1$.  It carves out a real
  representation called the \emph{Majorana representation} $N := \mathrm{Eig}(J,+1)$,
  the eigenspace of $J$ of eigenvalue +1, whose elements are called the
  \emph{Majorana spinors}. In this case the \emph{Dirac conjugation} $\psi \mapsto
  \psi^\dagger \Gamma_0$ on elements $\psi \in \mathbb{C}^{2^\nu}$ restricts to $N$
  and is called the \emph{Majorana conjugation}. We write it as simply $\psi \mapsto
  \overline{\psi}$.  In terms of this matrix representation then the spinor bilinear
  pairing that appears in Definition \ref{def:SuperMinkowski} is given by the
  following matrix product expression:
$$
  [\psi,\phi] \;=\; \left(\overline{\psi} \Gamma^a \phi\right)_{a = 0}^{d-1}
  \,.
$$

Similarly, for $d \in \{5,6,7\}$ there exists a quaternionic structure on the Dirac
representation. This is a $\Spin(d-1,1)$-equivariant antilinear endomorphism $\tilde
J$ which squares to minus the identity, $\tilde J^2 = -1$.  It follows that
$$
  J := \left(
    \begin{array}{cc}
      0 & -\tilde J
      \\
      \tilde J & 0
    \end{array}
  \right)
$$
is a real structure on the direct sum of the Dirac representation with itself. Hence
as before $N := \mathrm{Eig}(J,+1)$ is a real subrepresentation, called the
\emph{symplectic Majorana representation}. The spinor-to-vector bilinear pairing for
symplectic Majorana spinors is similar to the case of Majorana spinors.
\end{example}

\begin{definition}[Cayley--Dickson double {\cite[Section 2.2]{Baez02}}]
  \label{CayleyDickson}
  Let $\mathbb{K}$ be a real $*$-algebra. This is a real, not necessarily
  associative algebra $\K$ equipped with a \emph{conjugation}
  $\overline{(-)} \maps \mathbb{K} \to \mathbb{K}$, satisfying:
  \[ \overline{a + b} = \overline{a} + \overline{b}, \quad \overline{ab} =
  \overline{b} \, \overline{a}, \quad \overline{\overline{a}} = a , \]
  for any $a, b \in \K$. 
  Then the \emph{Cayley--Dickson double} $\mathbb{K}_{\rm dbl}$ of $\mathbb{K}$ is the real
  $*$-algebra obtained from $\mathbb{K}$ by adjoining one element $\ell$ such that
  $\ell^2 = -1$ and such that the following relations hold, for all $a, b \in
  \mathbb{K}$:
  $$
    a(\ell b) = \ell(\overline{a} b)
    \,,\;\;\;\;
    (a \ell) b = (a \overline{b}) \ell
    \,,\;\;\;\;
    (\ell a)(b \ell) = -\, \overline{(a b)}
    \,.
  $$
  Finally, the conjugation $\overline{(-)}$ on $\mathbb{K}_{\mathrm{dbl}}$ acts on
  elements of $\K$ by the conjugation on $\K$, and sends the new generator $\ell$ to
  $-\ell$.
\end{definition}
\begin{example}
  \label{RCHO}
  Consider $\mathbb{R}$ the real numbers regarded as a $*$-algebra with trivial
  conjugation $\overline{a} = a$. Then its Cayley--Dickson double (Definition
  \ref{CayleyDickson}) is the complex numbers $\mathbb{C}$ with the usual
  conjugation, the Cayley--Dickson double of $\C$ is the quaternions $\mathbb{H}$,
  and the Cayley--Dickson double of $\H$ is the octonions $\mathbb{O}$.

  By a classical result of Hurwitz, these four algebras are the only normed division
  algebras over the real numbers, as reviewed by Baez \cite{Baez02}.
\end{example}

In the next proposition and elsewhere in the text, we will use $n \times n$ matrices
over $\K$ to describe real linear operators on $\K^n$. We will write $\K[n]$ for the
set of all $n \times n$ matrices with entries in $\K$. For any such matrix, there are
two natural ways for it to induce a linear operator, one using left multiplication in
$\K$ and the other right multiplication.

\begin{definition}(Matrices over $\K$ as linear operators)
  \label{MatricesToLinearOps}
Let $\K$ be a normed division algebra. Any element of $a \in \K$ induces a linear
endomorphism on $\K$ by left or right multiplication, which we will denote by $a_L$
or $a_R$, respectively:
\[ \begin{array}{rcl}
  a_L \maps \K & \to & \K \\
  x & \mapsto & ax
  \end{array}
  , \quad
  \begin{array}{rcl}
  a_R \maps \K & \to & \K \\
  x & \mapsto & xa .
  \end{array}
\]
More generally, any $n \times n$ matrix $A \in \K[n]$ induces a linear endomorphism on
$\K^n$ via either left multiplication or right multiplication:
\[ \begin{array}{rcl}
  A_L \maps \K^n & \to & \K^n \\
  x & \mapsto & \sum a_{ij} x_j
  \end{array}
, \quad
\begin{array}{rcl}
  A_R \maps \K^n & \to & \K^n \\
  x & \mapsto & \sum x_j a_{ij}
  \end{array}
\]
where we are using the subscript $x_j$ to denote the $j$th coordinate of $x \in
\K^n$, and $A = (a_{ij})$. In other words, $A_L$ and $A_R$ are the linear maps on
$\K^n$ with components $\left( (a_{ij})_L \right)$ and $\left( (a_{ij})_R \right)$,
respectively. We say that $A_L$ is the \emph{left action} of $A$ and $A_R$ is the
\emph{right action} of $A$. We caution that because $\K$ is nonassociative, $(AB)_L
\neq A_L B_L$ in general, and because $\K$ is nonassociative and noncommutative,
$(AB)_R \neq A_R B_R$ in general.
\end{definition}

\begin{remark}
  The left action of a matrix $A$ by $A_L$ is just the usual matrix multiplication,
  so we will sometimes write:
  \[ A_L x = Ax . \]
  The utility of defining the linear transformation $A_L$ is that the composition of
  linear transformations is associative, so we do not need to worry about the
  nonassociativity of $\K$ when we compose them. For example:
  \[ A_L B_L C_L x = A(B(Cx)) . \]
\end{remark}

Since $\K$ comes with a conjugation, we can define the conjugate of any matrix in
$\K[n]$ by taking the conjugate of each entry, and the conjugate of any element of
$\K^n$ by taking the conjugate of each coordinate. It is then an elementary
calculation to show that the action of a matrix $A$ by $A_L$ and $A_R$ are related by
conjugation:

\begin{proposition}
  \label{LeftAndRight}
  Let $A \in \K[n]$ be an $n \times n$ matrix over the normed division algebra $\K$
  (as defined in Example \ref{RCHO}). Then
  \[ \overline{A_L x} = (\overline{A})_R \, \overline{x} \mbox{ and } \overline{A_R x} = (\overline{A})_L \, \overline{x} \]
  for all $x \in \K^n$.
\end{proposition}

The next definition is straightforward, but is central to realizing spin
representations via normed division algebras.
\begin{definition}[{\cite{Schray96}}]
  \label{TraceReversal}
  For $A \in \h_2(\K)$ a hermitian matrix with coefficients in
  one of the four real normed division algebras from Example \ref{RCHO}. Then its \emph{trace reversal}
  is
  $$
    \widetilde A
      :=
    A - \mathrm{tr}(A)\cdot \mathbf{1}
    \,.
  $$
\end{definition}
\begin{proposition}[{\cite{BaezHuerta09}}]
  \label{RealSpinorsFromTheNormedDivisionAlgbras}
  Let $\K \in \{\mathbb{R}, \mathbb{C}, \mathbb{H}, \mathbb{O}\}$ be one of the
  normed division algebras as in Example \ref{RCHO}.  Write $\h_2(\K)$ for the real
  vector space of $2 \times 2$ hermitian matrices with coefficients in $\K$, and $k$
  for the dimension of $\K$.

  Then:
  \begin{enumerate}
    \item There is an isomorphism of inner product spaces (``forming Pauli matrices over $\mathbb{K}$'')
    $$
      (\mathbb{R}^{k+1,1}, \eta)
        \stackrel{\simeq}{\longrightarrow}
      \left(
        \h_2(\K), -\mathrm{det}
      \right)
    $$
    identifying $\mathbb{R}^{k+1,1}$ equipped with its Minkowski inner product
    $$
      \eta(A,B) := -A^0 B^0 + A^1 B^1 + \cdots + A^{k+1} B^{k+1}, \mbox{ for } A, B \in \R^{k+1,1} 
    $$
    with the space of hermitian matrices equipped with the negative of the determinant operation.

  \item Let $N_+$ and $N_-$ both denote the vector space $\K^2$. Then $N_+ \oplus
    N_-$ is a module of the Clifford algebra $\Cl(k+1,1)$, with the action of a
    vector in $A \in \R^{k+1,1}$ given by
    \[ \Gamma(A) (\psi, \phi) = (\tilde{A}_L \phi, A_L \psi) \]
    for any element $(\psi, \phi) \in N_+ \oplus N_-$, where we are using the
    identification of vectors with $2 \times 2$ hermitian matrices. Here $\widetilde
    {(-)}$ is the trace reversal operation from Def. \ref{TraceReversal}, and
    $(-)_L$ denotes the linear map given by left multiplication as in
    Def. \ref{MatricesToLinearOps}

  \item Realizing the spin group $\Spin(k+1,1)$ inside the Clifford algebra
    $\Cl(k+1,1)$ by the standard construction, this induces irreducible
    representations $\rho_\pm$ of $\Spin(k+1,1)$ on $N_\pm$. Explicitly, recall that
    $\Spin(k+1,1)$ is the subgroup of the Clifford algebra
    generated by products of pairs of unit vectors of the same sign:
      \[ \Spin(k+1, 1) = \langle AB \in \Cl(k+1,1) \, : \, A, B \in \R^{k+1,1}, \, \eta(A,A) = \eta(B,B) = \pm 1
  \rangle . \]
    Then restricting the Clifford action to these elements, a generator $AB$ of
    $\Spin(k+1,1)$ acts as
    \[ \rho_+(AB) = \tilde{A}_L B_L \mbox{ on } N_+ \]
    and as
    \[ \rho_-(AB) = A_L \tilde{B}_L \mbox{ on } N_-,\]
    where again $\widetilde {(-)}$ is the trace reversal operation from
    Def. \ref{TraceReversal}, and where $(-)_L$ denotes the linear map given by left
    multiplication as in Def. \ref{MatricesToLinearOps}.

    For $\mathbb{K} \in \{\mathbb{R}, \mathbb{C}\}$ then these two representations
    are in fact isomorphic and are the \emph{Majorana representation} of
    $\mathrm{Spin}(2,1)$ and $\Spin(3,1)$, respectively, while for $\mathbb{K} \in \{
    \mathbb{H}, \mathbb{O} \}$ they are the two non-isomorphic
    \emph{symplectic-Majorana representations} of $\mathrm{Spin}(5,1)$ and
    \emph{Majorana--Weyl representations} of $\mathrm{Spin}(9,1)$, respectively.

  \item Under the above identifications, the symmetric bilinear $\Spin(k+1,1)$-equivariant
    spinor-to-vector pairings are given by
    \[
    \begin{array}{rcl}
      [-,-] \maps N_+ \otimes N_+ & \to & \R^{k+1,1} \\
      \psi \otimes \phi & \mapsto & \tfrac{1}{2}\left( \widetilde{ \psi \phi^\dagger + \phi \psi^\dagger} \right)
      \,
      \end{array}
    \]
    and
    \[
    \begin{array}{rcl}
      [-,-] \maps N_- \otimes N_- & \to & \R^{k+1,1} \\
      \psi \otimes \phi & \mapsto & \tfrac{1}{2}\left( \psi \phi^\dagger + \phi \psi^\dagger \right)
      \,
      \end{array}
    \]

  \item There is a bilinear symmetric, non-degenerate and $\mathrm{Spin}(k+1,1)$-invariant
    spinor-to-scalar pairing given by
    \[
    \begin{array}{rcl}
      \langle -, -\rangle \maps N_\pm \otimes N_{\mp}  & \to & \R \\
      \psi \otimes \phi & \mapsto & \mathrm{Re}(\psi^\dagger \phi)
      \,.
    \end{array}
    \]
  \end{enumerate}
\end{proposition}

\medskip

\noindent{\bf Acknowledgements.}
We thank David Corfield for comments on an earlier version of this article. We also
thank John Baez and an anonymous referee for comments and corrections. We thank the
Max Planck Institute for Mathematics in Bonn for kind hospitality while the result
reported here was conceived. U.S.\ thanks Roger Picken for an invitation to Instituto
Superior T\'ecnico, Lisbon, where parts of this article were written. U.S.\ was
supported by RVO:67985840. J.H.\ was supported by the Portuguese science foundation
grant SFRH/BPD/92915/2013.

\end{document}